
\textwidth = 7 in

\documentclass[onefignum,onetabnum]{siamonline190516}



\usepackage{graphicx}
\usepackage{epstopdf}
\usepackage{dsfont}
\usepackage{amssymb, amsmath,amsfonts}
\usepackage{bm}
\usepackage{bbm}
\usepackage{tikz}
\usepackage{caption}
\usepackage{subcaption}
\usepackage{wrapfig}
\usepackage{tikz}
\usepackage{multirow}
\usetikzlibrary{shapes.geometric, arrows, positioning}
\usepackage{graphicx}
\usepackage{soul}
\usepackage{threeparttable}
\usepackage[algo2e,ruled,vlined,linesnumbered]{algorithm2e}

\usepackage{booktabs}

\usepackage[colorinlistoftodos]{todonotes}

\ifpdf
  \DeclareGraphicsExtensions{.eps,.pdf,.png,.jpg}
\else
  \DeclareGraphicsExtensions{.eps}
\fi

\usepackage{enumitem}
\setlist[enumerate]{leftmargin=.5in}
\setlist[itemize]{leftmargin=.5in}


\newsiamremark{remark}{Remark}
\newsiamremark{hypothesis}{Hypothesis}
\crefname{hypothesis}{Hypothesis}{Hypotheses}
\newsiamthm{claim}{Claim}

\headers{Implied Volatility with VAEs}{Ning, Jaimungal, Zhang, Bergeron}

\title{Arbitrage-Free Implied Volatility Surface Generation\\ with Variational Autoencoders\thanks{The authors thank Ivan Sergienko for his comments on earlier versions of this work. S.J. acknowledges the support of the Natural Sciences \& Engineering Research council of Canada [ALLRP 550308 - 20].}
}


\author{Brian (Xin) Ning\thanks{Department of Statistical Sciences, University of Toronto (\email{brian.ning@mail.utoronto.ca},
\email{sebastian.jaimungal@utoronto.ca}; \url{http://sebastian.statistics.utoronto.ca},
\email{xiaorong.zhang@mail.utoronto.ca})}\and Sebastian Jaimungal\footnotemark[2]
\and Xiaorong Zhang\footnotemark[2]\and Maxime Bergeron\thanks{Riskfuel Analytics (\email{mb@riskfuel.com}; \url{http://riskfuel.com})}
}


%


\usepackage{amsopn}



\renewcommand{\arraystretch}{1.7}

\DeclareMathOperator*{\argmin}{argmin}
\DeclareMathOperator*{\argmax}{argmax}
\newcommand{\distas}[1]{\mathbin{\overset{#1}{\kern\z@\sim}}}
\newcommand{\norm}[1]{\big\lVert#1\big\rVert}

\newcommand{\K}{{\mathfrak{K}}}
\newcommand{\G}{{\mathfrak{G}}}
\newcommand{\Z}{{\mathfrak{Z}}}
\newcommand{\D}{{\mathfrak{D}}}
\newcommand{\M}{{\mathfrak{M}}}

\newcommand{\R}{{\mathds{R}}}
\newcommand{\Q}{{\mathbb{Q}}}
\newcommand{\E}{{\mathbb{E}}}
\newcommand{\x}{{\mathbf{x}}}
\newcommand{\z}{{\mathbf{z}}}
\newcommand{\y}{{\mathbf{y}}}
\newcommand{\EQ}{{\E^\Q}}
\newcommand{\mubar}{\bar{\mu}}
\newcommand{\tmu}{\tilde{\mu}}
\newcommand{\tpi}{\tilde{\pi}}
\newcommand{\tsig}{\tilde{\sigma}}
\newcommand{\tp}{\tilde{p}}

\newcommand{\Id}{{\mathds{1}}}
\newcommand{\new}[1]{#1}

\tikzstyle{reward}=[shape=circle,draw=blue!50,fill=blue!10]
\tikzstyle{action}=[shape=circle,draw=green,fill=green!10]
\tikzstyle{state}=[shape=circle,draw=red!50,fill=red!10]
\tikzstyle{gru}=[shape=rectangle,draw=black!50,fill=lime!10]
\tikzstyle{obs}=[shape=circle,draw=blue!50,fill=blue!10]
\tikzstyle{lightedge}=[<-,dotted]
\tikzstyle{mainstate}=[state,thick]
\tikzstyle{mainedge}=[<-,thick]

\tikzstyle{input} = [rectangle, minimum width=1cm, minimum height=4cm,text centered, draw=black, fill=green!30]
\tikzstyle{output} = [rectangle, minimum width=1cm, minimum height=4cm,text centered, draw=black, fill=blue!30]
\tikzstyle{latent} = [rectangle, minimum width=1cm, minimum height=1cm,text centered, draw=black, fill=red!30]
\tikzstyle{encoder} = [trapezium, trapezium left angle=70, trapezium right angle=70, minimum height=1cm, rotate=270, text centered, draw=black, fill=green!30]
\tikzstyle{decoder} = [trapezium, trapezium left angle=70, trapezium right angle=70, minimum height=1cm, rotate=90, text centered, draw=black, fill=blue!30]
\tikzstyle{arrow} = [thick,->,>=stealth]

\graphicspath{{./Figures/}}

\begin{document}


\maketitle

\begin{abstract}
We propose a hybrid method for generating arbitrage-free implied volatility (IV) surfaces consistent with historical data by combining model-free Variational Autoencoders (VAEs) with continuous time stochastic differential equation (SDE) driven models. We focus on two classes of SDE models: regime switching models and L\'evy additive processes. By projecting historical surfaces onto the space of SDE model parameters, we obtain a distribution on the parameter subspace faithful to the data on which we then train a VAE. Arbitrage-free IV surfaces are then generated by sampling from the posterior distribution on the latent space, decoding to obtain SDE model parameters,  and finally mapping those parameters to IV surfaces. \new{We further refine the VAE model by including conditional features  and demonstrate its  superior generative out-of-sample performance.}
\end{abstract}

\section{Introduction}

Modelling implied volatility (IV) surfaces 
in a manner that reflects historical dynamics while remaining arbitrage-free is a challenging open problem in finance. There are numerous approaches driven by stochastic differential equations (SDEs) that aim to do just so, including local volatility models  \cite{dupire1994pricing}, stochastic volatility models \cite{heston1993closed,hagan2002managing}, stochastic local volatility models \cite{said1999pricing}, jump-diffusion models \cite{cont2002calibration}, and regime switching models \cite{buffington2002regime}, among many others. Such approaches make specific assumptions on the dynamics of the underlying asset  and a choice of an equivalent martingale measure in order to avoid arbitrage. While these assumptions are not necessarily dynamically consistent with historical data, they do allow, e.g., pricing exotic derivatives via Monte Carlo or PDE methods.

An alternative to the SDE  approach is to use  non-parametric models to approximate IV surfaces directly without making assumptions on the underlying dynamics. For example, ML models such as support vector machines (SVMs) have been used to model such surfaces \cite{zeng2019online}. The issue of ensuring arbitrage-free surfaces is often tackled jointly during model fitting \cite{bergeron2021autoencoders} either through penalisation of arbitrage constraints \cite{ackerer2020deep} or by directly encoding them into the network architecture \cite{zheng2019gated}. These approaches, however, typically do not provide any guarantees and may not be arbitrage-free across the entire surface. A recent intriguing approach \cite{cohen2021arbitrage} is to reduce surfaces to arbitrage-free `factors' -- learned, e.g., through principal component analysis (PCA)  -- which can then be modeled using neural SDEs \cite{li2020scalable}. This approach, while very promising, relies on the quality of the `factors' which are often complicated to compute. Another recent approach is that of \cite{doi:10.1137/20M1381538} where the authors use Gaussian processes under shape constraints to generate surfaces and illustrate good fits to S\&P data. Here, however, we are interested in the  setting of sparse FX data and in generating the distribution over surfaces in a manner that is consistent with the historical data. \new{The construction of arbitrage-free models based on ML  approaches for stochastic interest rates has been tackled in \cite{kratsios2020deep}. In contrast, our focus is on European options and, more specifically, our application setting is to FX options.}

In this paper, we develop a hybrid approach to resolve these issues by using SDE models that are by construction arbitrage-free yet flexible enough to fit arbitrary IV surfaces.  One immediate dividend of this approach lies in its ability to produce realistic synthetic training data that can be used to leverage deep learning pricing methods in downstream tasks \cite{ferguson2018deeply,horvath2019deep}. The class of SDE models we consider include time-varying regime switching models and L\'evy additive processes detailed in Section \ref{sec:models}. We avoid overfitting by incorporating a Wasserstein penalty to keep the SDE model's risk-neutral density from deviating too far from the \new{candidate} one. The SDE model parameters, once fitted to data, represent a parameter subspace reflecting the features embedded in the data. The distribution on the subspace depends on the characteristics of the underlying asset and can be complex. We ``learn'' this distribution by using Variational Autoencoders (VAEs) which also allows for disentanglement of the subspace in an interpretable manner. SDE model parameters may be generated from the VAE model and used to create IV surfaces that are both faithful to the historical data but also strictly risk-neutral. This is similar in spirit, but distinct from, the tangent L\'evy model approach introduced in \cite{carmona2012tangent} where a L\'evy density is used to generate arbitrage-free prices while, here, the VAE generates parameters of the SDE model.


The overall approach may be summarised as: (i) fit a rich arbitrage-free SDE model to historical market data to obtain a collection of parameters, (ii) train a generative model, in particular a VAE model, on the collection of SDE model parameters, (iii) sample from the latent space of the generative VAE model, (iv) decode the samples to obtain a collection of SDE model parameters, and (v) use said SDE model and parameters to obtain arbitrage-free surfaces faithful to the historical data.  A flow-chart of the process is presented in Figure \ref{fig:flow_chart}.
\new{We further refine the VAE model by including conditioning features into the encoding and decoding architectures. This results in a conditional VAE (CVAE) model, first introduced in \cite{sohn2015learning} in a very different setting, for the arbitrage-free model parameter embeddings. We find that the CVAE model outperforms all others when comparing out-of-sample performance.}

The remainder of this article is organised as follows. Section \ref{sec:Fitting} describes a generic method of fitting SDE models to a limited data set. Section \ref{sec:models} defines the financial models we use in calibration. Section \ref{sec:vae} details the structure of the VAE and its generative process. \new{Section \ref{sec:CVAE} extends the VAE framework by conditioning prespecified features.} Finally, Section \ref{sec:results} presents the results of our algorithm applied to 1,900 days of foreign exchange (FX) data for three currency pairs\footnote{AUD = Australian Dollar, USD = US Dollar, and CAD = Canadian Dollar.}: AUD-USD, EUR-USD, and CAD-USD.

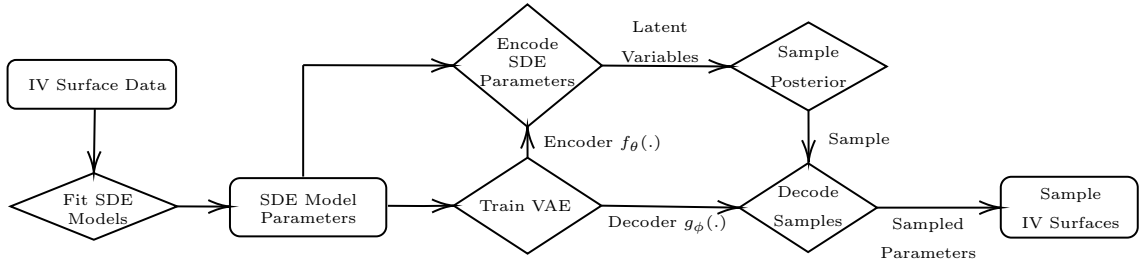
\begin{figure}
    \centering

\tikzset{every picture/.style={line width=0.75pt}} 

\begin{tikzpicture}[x=0.75pt,y=0.75pt,yscale=-1,xscale=1]

\draw    (45,62.25) -- (44.54,92.35) ;
\draw [shift={(44.51,94.35)}, rotate = 270.88] [color={rgb, 255:red, 0; green, 0; blue, 0 }  ][line width=0.75]    (10.93,-3.29) .. controls (6.95,-1.4) and (3.31,-0.3) .. (0,0) .. controls (3.31,0.3) and (6.95,1.4) .. (10.93,3.29)   ;
\draw   (44.51,94.35) -- (86.52,111.12) -- (44.51,127.9) -- (2.5,111.12) -- cycle ;
\draw   (1,41.74) .. controls (1,39.37) and (2.92,37.45) .. (5.3,37.45) -- (81.7,37.45) .. controls (84.08,37.45) and (86,39.37) .. (86,41.74) -- (86,57.7) .. controls (86,60.08) and (84.08,62) .. (81.7,62) -- (5.3,62) .. controls (2.92,62) and (1,60.08) .. (1,57.7) -- cycle ;
\draw    (86.52,111.12) -- (113.16,111.23) ;
\draw [shift={(113.16,111.23)}, rotate = 180] [color={rgb, 255:red, 0; green, 0; blue, 0 }  ][line width=0.75]    (10.93,-3.29) .. controls (6.95,-1.4) and (3.31,-0.3) .. (0,0) .. controls (3.31,0.3) and (6.95,1.4) .. (10.93,3.29)   ;
\draw   (113.16,102.02) .. controls (113.16,99.19) and (115.45,96.9) .. (118.29,96.9) -- (186.87,96.9) .. controls (189.7,96.9) and (192,99.19) .. (192,102.02) -- (192,121.07) .. controls (192,123.9) and (189.7,126.2) .. (186.87,126.2) -- (118.29,126.2) .. controls (115.45,126.2) and (113.16,123.9) .. (113.16,121.07) -- cycle ;
\draw   (263.25,86.61) -- (300.5,110.75) -- (263.25,134.89) -- (226,110.75) -- cycle ;
\draw    (192.83,110.83) -- (210.9,110.94) -- (224,110.78) ;
\draw [shift={(226,110.75)}, rotate = 539.27] [color={rgb, 255:red, 0; green, 0; blue, 0 }  ][line width=0.75]    (10.93,-3.29) .. controls (6.95,-1.4) and (3.31,-0.3) .. (0,0) .. controls (3.31,0.3) and (6.95,1.4) .. (10.93,3.29)   ;
\draw   (263.2,9.51) -- (300.77,40.28) -- (263.2,71.06) -- (225.64,40.28) -- cycle ;
\draw    (150.08,40.07) -- (223.64,40.28) ;
\draw [shift={(225.64,40.28)}, rotate = 180.16] [color={rgb, 255:red, 0; green, 0; blue, 0 }  ][line width=0.75]    (10.93,-3.29) .. controls (6.95,-1.4) and (3.31,-0.3) .. (0,0) .. controls (3.31,0.3) and (6.95,1.4) .. (10.93,3.29)   ;
\draw    (150.08,40.07) -- (150.08,96.12) ;
\draw    (263.25,86.61) -- (263.21,73.06) ;
\draw [shift={(263.2,71.06)}, rotate = 449.83] [color={rgb, 255:red, 0; green, 0; blue, 0 }  ][line width=0.75]    (10.93,-3.29) .. controls (6.95,-1.4) and (3.31,-0.3) .. (0,0) .. controls (3.31,0.3) and (6.95,1.4) .. (10.93,3.29)   ;
\draw    (300.77,40.28) -- (363.5,40.74) ;
\draw [shift={(365.5,40.75)}, rotate = 180.41] [color={rgb, 255:red, 0; green, 0; blue, 0 }  ][line width=0.75]    (10.93,-3.29) .. controls (6.95,-1.4) and (3.31,-0.3) .. (0,0) .. controls (3.31,0.3) and (6.95,1.4) .. (10.93,3.29)   ;
\draw   (404.94,18.58) -- (444.38,40.75) -- (404.94,62.92) -- (365.5,40.75) -- cycle ;
\draw    (404.94,62.92) -- (405,86.75) ;
\draw [shift={(405,88.75)}, rotate = 269.87] [color={rgb, 255:red, 0; green, 0; blue, 0 }  ][line width=0.75]    (10.93,-3.29) .. controls (6.95,-1.4) and (3.31,-0.3) .. (0,0) .. controls (3.31,0.3) and (6.95,1.4) .. (10.93,3.29)   ;
\draw   (404.75,89.39) -- (439.5,111.75) -- (404.75,134.11) -- (370,111.75) -- cycle ;
\draw    (300.5,110.75) -- (368,111.72) ;
\draw [shift={(370,111.75)}, rotate = 180.82] [color={rgb, 255:red, 0; green, 0; blue, 0 }  ][line width=0.75]    (10.93,-3.29) .. controls (6.95,-1.4) and (3.31,-0.3) .. (0,0) .. controls (3.31,0.3) and (6.95,1.4) .. (10.93,3.29)   ;
\draw   (501.92,101.59) .. controls (501.92,98.63) and (504.32,96.23) .. (507.28,96.23) -- (565.64,96.23) .. controls (568.6,96.23) and (571,98.63) .. (571,101.59) -- (571,121.48) .. controls (571,124.44) and (568.6,126.83) .. (565.64,126.83) -- (507.28,126.83) .. controls (504.32,126.83) and (501.92,124.44) .. (501.92,121.48) -- cycle ;
\draw    (439.5,111.75) -- (500,111.75) ;
\draw [shift={(502,111.75)}, rotate = 180] [color={rgb, 255:red, 0; green, 0; blue, 0 }  ][line width=0.75]    (10.93,-3.29) .. controls (6.95,-1.4) and (3.31,-0.3) .. (0,0) .. controls (3.31,0.3) and (6.95,1.4) .. (10.93,3.29)   ;

\draw (46.35,49.95) node  [font=\tiny] [align=left] {{\tiny IV Surface Data}};
\draw (24.16,107.21) node [anchor=north west][inner sep=-2pt]
[font=\tiny] [align=left]
{\begin{minipage}[lt]{36.91pt}\setlength\topsep{0pt}
\begin{center}
{\tiny Fit SDE }\\{\tiny Models}
\end{center}
\end{minipage}};
\draw (262.03,110.75) node  [font=\footnotesize] [align=left] {{\tiny Train VAE}};
\draw (263.2,38.28) node  [font=\tiny] [align=left] {\begin{minipage}[lt]{36.91pt}\setlength\topsep{0pt}
\begin{center}
{\tiny Encode SDE }\\{\tiny Parameters}
\end{center}

\end{minipage}};
\draw (269.85,73.96) node [anchor=north west][inner sep=0.75pt]  [font=\footnotesize] [align=left] {{\tiny Encoder $f_\theta(.)$}};

\draw (302.27,16.63) node [anchor=north west][inner sep=0.75pt]  [font=\footnotesize] [align=left] {\begin{minipage}[lt]{39.43pt}\setlength\topsep{0pt}
\begin{center}
{\tiny Latent }\\{\tiny Variables}
\end{center}

\end{minipage}};
\draw (404.94,40.75) node  [font=\footnotesize] [align=left] {\begin{minipage}[lt]{38.59pt}\setlength\topsep{0pt}
\begin{center}

{\tiny Sample}\\{\tiny Posterior}
\end{center}

\end{minipage}};

\draw (404.75,111.75) node  [font=\footnotesize] [align=left] {\begin{minipage}[lt]{38.59pt}\setlength\topsep{0pt}
\begin{center}
{\tiny Decode}\\{\tiny Samples}
\end{center}

\end{minipage}};
\draw (411.26,72.46) node [anchor=north west][inner sep=0.75pt]  [font=\footnotesize] [align=left] {\begin{minipage}[lt]{26.55pt}\setlength\topsep{0pt}
\begin{center}
{\tiny Sample}
\end{center}

\end{minipage}};
\draw (302.32,114.06) node [anchor=north west][inner sep=0.75pt]  [font=\footnotesize] [align=left]{{\tiny Decoder $g_\phi(.)$}};

\draw (508.46,100.25) node [anchor=north west][inner sep=0.75pt]  [font=\footnotesize] [align=left] {\begin{minipage}[lt]{40.81pt}\setlength\topsep{0pt}
\begin{center}
{\tiny Sample}\\{\tiny IV Surfaces}
\end{center}

\end{minipage}};
\draw (439.79,115.47) node [anchor=north west][inner sep=0.75pt]  [font=\footnotesize] [align=left] {\begin{minipage}[lt]{36.06pt}\setlength\topsep{0pt}
\begin{center}
{\tiny Sampled}\\{\tiny Parameters}
\end{center}

\end{minipage}};
\draw (152.7,112.28) node  [font=\tiny] [align=left] {\begin{minipage}[lt]{36.81pt}\setlength\topsep{0pt}
\begin{center}
{\tiny SDE Model}\\{\tiny Parameters}
\end{center}

\end{minipage}};

\end{tikzpicture}
    \caption{Flow chart of algorithm starting from raw IV surface data to generated surfaces.\label{fig:flow_chart}}
\vspace*{-2em}
\end{figure}
\section{Model Setup and Estimation Procedure}
\label{sec:Fitting}

We work with a completed filtered probability space  $(\Omega,\mathbb{Q},\mathcal{F},(\mathcal{F}_t)_{t\geq0})$ where the filtration is the natural one generated by a stochastic driver $X:=(X_t)_{t\geq0}$. We explore several choices of models for $X$ in Section \ref{sec:models}. Here, $\mathbb{Q}$ represents the risk-neutral probability measure and we assume that the market prices options using this measure and model the FX rate process $S=(S_t)_{t\ge0}$  as follows:
\begin{equation}
    S_t = S_0\, e^{\int_0^t (r^d_s - r^f_s)\,ds + \int_0^{t} l(s)ds + X_t }, \qquad \forall \; t\ge0,
\end{equation}
where $r^d:=(r^d_t)_{t\ge0}$ and $r^f:=(r^f_t)_{t\ge0}$ are the domestic and foreign short rate processes, and $l$ is a deterministic function of time that ensures $(e^{-\int_0^t (r^d_s - r^f_s)\,ds } S_t)_{t\ge0}$ is a $\mathbb{Q}$-martingale. We assume interest rates are  deterministic since there is no conceptual difficulty in generalising to the stochastic case.

From, e.g., Theorem 3.2 in \cite{lewis2001simple}, we may write the undiscounted option price as
\begin{equation}
\EQ \Big[ (S_\tau - K )_+\Big]
    =
    \tilde{S}_0 - \frac{\sqrt{K \,\tilde{S}_0}}{\pi} \int_{0}^{\infty} \mathfrak{R}\!\left( e^{i\,z\,\log \frac{\tilde{S}_0}{K}}
    \,\phi_\theta\left(z - \tfrac{1}{2}i\right)\right) \frac{dz}{z^2 + \frac{1}{4}},
    \label{eqn:FourierTransformPrice}
\end{equation}
where $\tilde{S}_0:=S_0\, e^{\int_0^\tau (r^d_s - r^f_s)\,ds + \int_0^{\tau} l(s)ds}$, $\phi_\theta(z):=\EQ[e^{izX_\tau}]$ is the  characteristic function of $X_\tau$, $\theta$ encodes the parameters of the stochastic driver $X$, and $\mathfrak{R}(\cdot)$ denotes the real component of its argument. For the class of SDE models considered here, the characteristic function is known in closed form and the above formula allows for efficient calibration to market data.
A naive approach to parameter estimation is to minimise the squared error between the model and data prices.
Such parameter estimation is prone to overfitting when data is sparse as is often the case in FX markets. To address this issue, we add a regularisation term to our objective. Specifically, we use the $1$-Wasserstein distance between the model price's risk-neutral probability distribution function (pdf), denoted $F^m_\theta(dx)$, and the implied pdf derived from option data, denoted $F^d(dx)$. Wasserstein distances provide a natural metric on the space of probability measures and have seen wide application across many fields \cite{villani2009optimal}. Other choices include divergences, such as the Kullback-Liebler divergence, or metric variations such as Jensen-Shannon divergence. We chose the Wasserstein distance in particular because it is the most ubiquitous and robust.

To this end, the 1-Wasserstein distance between $F^d$ and $F^m_\theta$ is given by
\begin{equation}
    W_1(F^d, F^m_\theta) = \inf_{\pi \in \Pi(F^d, F^m_\theta)}\textstyle\int_{\mathds{R}^2} |x - y|\, d\pi(x,y)
    = \textstyle\int_\R |F^d(x)-F^m_\theta(x)|\,dx,
\label{eqn:wass}   %
\end{equation}%
where $\Pi(F^d, F^m_\theta)$ denotes the set of all probability distributions on $\mathds{R}^2$ with marginals $F^d$ and $F^m_\theta$ and the second equality holds in dimension one \cite{vallender1974calculation}.
As data is observed at discrete strikes, we approximate the \new{candidate} density by interpolating IVs at each fixed maturity using B-splines. It is well known \cite{breeden1978prices} that $\partial_{KK} C(T,K)$ corresponds to the risk-neutral density of the underlying asset price evaluated at $K$. \new{The derivation of the spline implied density for the case of call options can be found in Appendix \ref{sec:ERDens}. }
Since we are using a B-spline, the corresponding density is not necessarily risk-neutral. This is not a problem, however, as we merely use them as a regularisation term and ultimately use risk-neutral models to derive option prices.

We use a combination of the pricing error and the Wasserstein distance \eqref{eqn:wass} as the loss function in model estimation. That is we seek to obtain model parameters
\begin{equation}
\label{eqn:fitting_loss}
    \theta^*:=\argmin_\theta \left( \norm{\boldsymbol{C^m_\theta} - \boldsymbol{C^d}} ^ 2 + \alpha \textstyle\sum_{n=1}^N W_1(F^{d,(n)}, F^{m,(n)}_\theta) \right)\!,
\end{equation}
where $\boldsymbol{C^d}$ and $\boldsymbol{C^m_\theta}$ denotes the flattened vector of data and model prices, respectively, at each strike-maturity pair, and $F^{d,(n)}$ and $F^{m,(n)}_\theta$ represent the model and data implied distribution functions at each maturity, respectively.
The regularisation parameter $\alpha \geq 0$  controls the importance placed on being close to the spline implied densities. \new{The effect this hyperparameter has on model accuracy is explored in detail in Appendix \ref{sec:alpha_eff}.}

\section{Class of Stochastic Drivers}
\label{sec:models}

In this section, we describe the class of models over which we perform estimation. Throughout, we denote the sequence of dates on which we have option implied volatility data by $\{\tau_1,\tau_2,\dots,\tau\}$ and define $\tau_0=0$.

\subsection{CTMC}
\label{sec:CTMC}


The continuous time Markov-Chain model (CTMC) is a multi-regime model that assumes the underlying asset follows a Geometric Brownian motion (GBM) modulated by a continuous time Markov-Chain representing the current market regime. They were first introduced into financial modelling in \cite{buffington2002regime}. Here, however, we generalise the model to allow for time-varying parameters and use transform methods in \cite{jackson2008fourier} to solve for the characteristic function. We use this model as a non-parametric approach to modelling the sequence of risk-neutral densities.  The potential of overfitting of such models is mitigated by the Wasserstein distance penalty in \eqref{eqn:wass}.


Let $(Z_t)_{t\ge0}$ denote a continuous time Markov chain taking on values in $\K:=\{1 \dots K\}$.  Suppose, moreover, that the generator matrix $A$ driving the CTMC, the regime specific vector of drifts $\mu$, and the regime specific vector of volatilities $\sigma$ are all constant on the sequence of maturity intervals $[\tau_{i-1},\tau_{i})$, but may vary across maturity periods. We can then consider a driving process $X$ satisfying the SDE
\begin{equation}
    dX_t = \left( \mu^{(n)}_{Z_t} - \tfrac{1}{2} (\sigma^{(n)}_{Z_t})^2 \right) \, dt
    + \sigma^{(n)}_{Z_t}\, dW_t, \qquad \forall t\in[\tau_{n-1},\tau),
    \label{eqn:dX-CTMC}
\end{equation}
where $\mu_{k}^{(n)}\in\R$ and $\sigma_k^{(n)}\in\R_+$ denote the expected return and instantaneous volatility in period $n$ when $Z_t=k$, $k\in\K$. Similarly, we let $A^{(n)}$  denote the transition rate matrix for period $n$, satisfying $A^{(n)}_{ij}\in\R_+$, $i\ne j$, and $\sum_{j=1}^KA_{ij}=0$.
\begin{proposition}
\label{prop:ctmc}
   \!\!If $X$ satisfies \eqref{eqn:dX-CTMC}, then the characteristic function $\phi_{X_{\tau}}\!(\omega)\!:=\!\EQ[e^{i\omega\,X_{\tau}}]$ is given by
    $\phi_{X}(\omega) = {\boldsymbol{\pi}}^\intercal e^{\tau_1 \Psi^{(1)}(\omega)} e^{(\tau_{2} - \tau_{1}) \Psi^{(2)}(\omega)} \dots e^{(\tau - \tau_{N-1}) \Psi^{(N)}(\omega)}\; \boldsymbol{1},
    $
    where $\pi_k=\mathbb{Q}(Z_0=k)$ is the prior probability of the latent state,
    \begin{equation}
    [\Psi^{(n)}(\omega)]_{jk} := \left( A^{(n)}_{kk} + i \left(\mu_k^n - \tfrac{1}{2} (\sigma_k^n)^2\right) \omega - \tfrac{1}{2} (\sigma_k^n)^2 \omega^2 \right) \delta_{jk} + A^{(n)}_{jk} \;(1 - \delta_{jk}),
\end{equation}
and $\delta_{jk}$ is the Kroencker delta, which equals $1$ if $j = k$ and $0$ otherwise.
\end{proposition}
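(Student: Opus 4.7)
My plan is to compute the joint conditional characteristic function of the Markov-modulated diffusion using a Feynman–Kac style argument and then compose across the regime-stability intervals. Define
\begin{equation}
f(t,x,k) := \EQ\!\left[ e^{i\omega X_\tau} \,\middle|\, X_t=x,\, Z_t=k \right],
\end{equation}
so that $\phi_X(\omega) = \E^\Q[f(0,0,Z_0)] = \sum_k \pi_k\, f(0,0,k)$. Because the coefficients in \eqref{eqn:dX-CTMC} depend only on $Z_t$ (not on $X_t$), the process $(X_t,Z_t)$ is Markov with a state-independent diffusion in the $x$-variable, so the terminal payoff $e^{i\omega x}$ factors spatially. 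I would therefore use the ansatz $f(t,x,k) = e^{i\omega x}\, g(t,k)$, which reduces the problem from a PDE to a finite-dimensional linear ODE for the vector $g(t) = (g(t,1),\dots,g(t,K))^\intercal$ with terminal condition $g(\tau,k)=1$ for all $k$, i.e.\ $g(\tau)=\mathbf{1}$.

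To obtain the ODE on $[\tau_{n-1},\tau_n)$, I would apply Itô's formula for the mixed diffusion–jump process $(X_t,Z_t)$ and enforce that $f(t,X_t,Z_t)$ is a local martingale. The Brownian piece contributes $i\omega(\mu_k^{(n)}-\tfrac12(\sigma_k^{(n)})^2)g(t,k) - \tfrac12(\sigma_k^{(n)})^2\omega^2 g(t,k)$ after cancelling the common factor $e^{i\omega x}$, while the compensator of the CTMC contributes $\sum_j A^{(n)}_{kj}(g(t,j)-g(t,k))$. Using the row-sum property $\sum_j A^{(n)}_{kj}=0$, this collapses to $\sum_j A^{(n)}_{kj}g(t,j)$. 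Collecting terms and moving to matrix notation gives
\begin{equation}
\dot g(t) = -\Psi^{(n)}(\omega)\, g(t), \qquad t \in [\tau_{n-1},\tau_n),
\end{equation}
where the $(j,k)$ entry of $\Psi^{(n)}$ is exactly the expression stated in the proposition: diagonal entries bundle the $A^{(n)}_{kk}$ contribution with the Brownian symbol $i(\mu^{(n)}_k-\tfrac12(\sigma^{(n)}_k)^2)\omega - \tfrac12(\sigma^{(n)}_k)^2\omega^2$, and off-diagonal entries carry the transition rates $A^{(n)}_{jk}$.

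Since each $\Psi^{(n)}(\omega)$ is constant on the interval $[\tau_{n-1},\tau_n)$, the ODE is solved backward by a single matrix exponential. Starting from $g(\tau)=\mathbf{1}$ and integrating over $[\tau_{N-1},\tau)$ gives $g(\tau_{N-1}) = e^{(\tau-\tau_{N-1})\Psi^{(N)}(\omega)}\mathbf{1}$; iterating this across the intervals and using continuity of $g$ at the breakpoints yields
\begin{equation}
g(0) = e^{\tau_1 \Psi^{(1)}(\omega)}\, e^{(\tau_2-\tau_1)\Psi^{(2)}(\omega)} \cdots e^{(\tau-\tau_{N-1})\Psi^{(N)}(\omega)}\, \mathbf{1}.
\end{equation}
Averaging over the initial-regime prior, $\phi_X(\omega) = \boldsymbol{\pi}^\intercal g(0)$, delivers the claimed formula.

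The routine steps — Itô, row-sum cancellation, and integrating a constant-coefficient linear system — are standard. The only real care point is bookkeeping on the indices so that the orientation of $\Psi^{(n)}$ matches the statement: the backward ODE naturally produces a matrix whose $(k,j)$ entry is $A^{(n)}_{kj}$, and one must verify that this coincides with the transposed-looking definition $[\Psi^{(n)}]_{jk} = A^{(n)}_{jk}$ off the diagonal given in the proposition; a quick check of both reading directions shows they agree. A minor secondary point is justifying that the local-martingale argument is a true martingale so that Feynman–Kac applies — this follows because $|e^{i\omega X_\tau}|\le 1$ and $X$ has finite exponential moments under $\Q$ on each regime interval, making the dominated-convergence conditions automatic.
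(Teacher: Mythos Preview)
Your proposal is correct and reaches the same ODE and matrix-exponential product as the paper, but the route to the ODE differs in a way worth noting. The paper writes down the full Feynman--Kac PDE for $v^k(t,x)=\EQ[e^{iz X_\tau}\mid X_t=x,Z_t=k]$, then applies a spatial Fourier transform to convert $\partial_x\mapsto i\omega$, obtaining a vector ODE with terminal condition $\mathcal{D}(\omega-z)\,\boldsymbol{1}$; the inverse transform at the end is trivial because of the Dirac delta. You instead observe up front that the terminal payoff $e^{i\omega x}$ and the $x$-independence of the coefficients force the separable ansatz $f(t,x,k)=e^{i\omega x}g(t,k)$, which lands you directly on the same ODE $\dot g=-\Psi^{(n)}g$ without the Fourier machinery. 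Your approach is more elementary and avoids the circuitous transform/inverse-transform pair; the paper's version has the minor advantage of plugging into the general Fourier-space pricing framework of \cite{jackson2008fourier} it cites, where the payoff need not already be an exponential. Your remarks on the index orientation of $\Psi^{(n)}$ and on the boundedness of $e^{i\omega X_\tau}$ justifying the martingale property are apt and make the argument slightly more careful than the paper's.
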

\begin{proof}
    See Appendix
\end{proof}






To assist with identifiability when estimating the CTMC model parameters, we introduce a cyclic  structure on the transition rate matrices. More precisely, we require that  $A^{(n)}_{ij}=\frac12\lambda_i^{(n)}(\delta_{j=i+1}+\delta_{j=i-1})$, for all $K>i>1$,    $A^{(n)}_{1,2}=A^{(n)}_{1,K}=\frac12\lambda^{(n)}_1$, $A^{(n)}_{K,1}=A^{(n)}_{K,K-1}=\frac12\lambda^{(n)}_K$, and $\lambda^{(n)}_i>0$, together with the usual constraint that $\sum_{j=1}^KA_{ij}=0$, for all $i\in\K$.

Figure \ref{fig:ex_plots} shows the CTMC model fitted to two days of data including the corresponding implied densities. \new{The two specific days are chosen as they correspond to a root mean squared error (rmse) that lie in the 50th and 90th quantiles of rmse across all days. }

\begin{figure}
\centering
\includegraphics[width=0.8\textwidth]{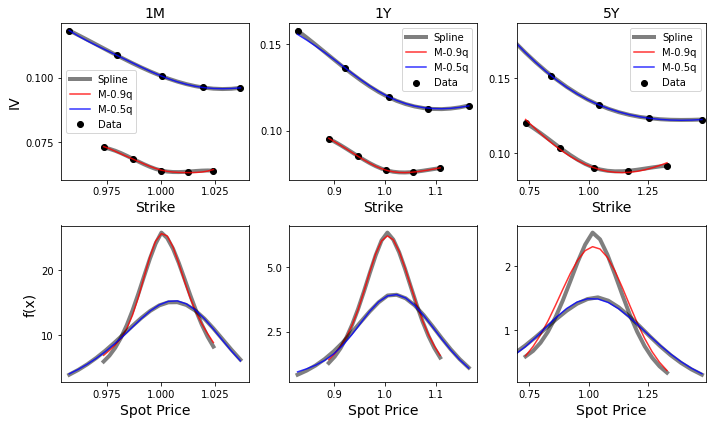}
\vspace*{-1em}
\caption{\new{Typical fits of the CTMC model to the IVs (top) and densities (bottom) of AUD-USD data with root mean squared error in the 50th (blue) and 90th (red) quantiles. Strike ranges differ on different days.}}
\label{fig:ex_plots}
\vspace*{-1em}
\end{figure}

\subsection{L\'evy Additive Processes}
\label{sec:jm}

For comparison, we also study a class of L\'evy additive processes to allow jumps in FX rates. To this end, we model $X$ as
\begin{equation}
    X_t = \textstyle\int_0^t \textstyle\int_\R y\, \left[\mu(dy,ds)-\nu(dy,ds)\right]+\textstyle\int_0^t \sigma_s\,dW_s,
    \label{eqn:X-additive-process}
\end{equation}
where $(\sigma_t)_{t\ge0}$ is piecewise deterministic, $\sigma_t :=\sigma^{(n)} \Id_{t\in[\tau_{n-1},\tau)}$, $\mu$ is a Poisson random measure with compensator $\nu$, and where $\nu(dy,ds)=\Id_{t\in[\tau_{n-1},\tau)}\,\nu^{(n)}(dy)\,ds$ with $\nu^{(n)}(dy)$ being L\'evy measures. This allows the structure of the L\'evy measure to differ between maturity periods and allows for both finite and infinite activity processes. For example, we may have  $\nu^{(n)}(dy)=\lambda^{(n)}\,F^{(n)}(dy)$, in which case $X$ is an additive compound Poisson process with jump measure $F^{(n)}(dy)$ and intensity $\lambda^{(n)}$, or   $\nu^{(n)}(dy)=C\left(\frac{e^{-G|y|}}{|y|^{1+Y}}\Id_{x<0} + \frac{e^{-M\,y}}{|y|^{1+Y}}\Id_{x>0}\right)\,dy$ in which case $X$ is additive version of a tempered stable L\'evy measure (also known as the KoBol \cite{boyarchenko2000option,boyarchenko2002non} or CGMY \cite{carr2002fine} models). There are a slew of alternate models as well, however, in the sake of brevity we restrict to these two classes. For the additive compound Poisson process, we include two jump measure types: mixture of normals $F^{(n)}(dy) = \sum_{j=1}^K\tilde{\pi}_k^{(n)}\,\phi\left(\frac{y-\tilde{\mu}_k^{(n)}}{\tilde{\sigma}_k^{(n)}}\right)\,dy$, where $\phi$ is the standard normal pdf -- to mimick a non-parametric estimation of the jump distribution implied by the data, and a double exponential model (see \cite{kou2004option}), in which case $F^{(n)}(dy)=\left((1-p^{(n)})\,e^{-a_-^{(n)}|x|}\Id_{x<0} +p^{(n)}\,e^{-a_+^{(n)}x}\Id_{x>0} \right)dy$.
\begin{proposition}
\label{prop:LevyAdditive}
    If $X$ satisfies \eqref{eqn:X-additive-process}, then the characteristic function
$\phi_{X}(\omega):=\EQ[e^{i\omega\,X_{\tau}}]$ is given by $\phi_{X}(\omega)=e^{\sum_{n=1}^N \Psi^{(n)}(\omega)(\tau-\tau_{n-1})}$,
         where
        \begin{equation}
        \Psi^{(n)}(\omega)=-\tfrac12\, (\sigma^{(n)})^2\,\omega^2 + \int_\R \left(e^{i\omega y}-1-i\omega y \Id_{|y|\le 1}\right)\nu^{(n)}(dy)\,.
        \label{eqn:Levy-char}
        \end{equation}
\end{proposition}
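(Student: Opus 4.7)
The plan is to exploit the Lévy–Khintchine formula together with the independence of increments that is built into the model. First I would telescope the terminal value $X_\tau$ as a sum over the maturity partition, $X_\tau = \sum_{n=1}^N \Delta X_n$ with $\Delta X_n := X_{\tau_n}-X_{\tau_{n-1}}$ and $\tau_N=\tau$. Because the Brownian motion has independent increments and the Poisson random measure $\mu$ takes independent values on disjoint time-strips $[\tau_{n-1},\tau_n)\times \R$, the random variables $\{\Delta X_n\}_{n=1}^N$ are mutually independent, so
\begin{equation}
\phi_X(\omega) \;=\; \EQ\!\left[e^{i\omega X_\tau}\right] \;=\; \prod_{n=1}^N \EQ\!\left[e^{i\omega\, \Delta X_n}\right].
\end{equation}
It therefore suffices to identify each factor on the right-hand side.

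Next, I would observe that on the interval $[\tau_{n-1},\tau_n)$ the coefficients $\sigma^{(n)}$ and the Lévy measure $\nu^{(n)}$ are constant in time, so restricted to this interval, $X$ is (in law) a Lévy process with Gaussian variance $(\sigma^{(n)})^2$ and Lévy measure $\nu^{(n)}$. Applying the standard Lévy–Khintchine representation to the increment $\Delta X_n$, its characteristic function is
\begin{equation}
\EQ\!\left[e^{i\omega\,\Delta X_n}\right] \;=\; \exp\!\left((\tau_n-\tau_{n-1})\,\Psi^{(n)}(\omega)\right),
\end{equation}
with $\Psi^{(n)}$ exactly as given in \eqref{eqn:Levy-char}. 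Substituting into the product form above and collapsing the exponentials yields the claimed identity $\phi_X(\omega)=\exp\!\bigl(\sum_{n=1}^N \Psi^{(n)}(\omega)(\tau_n-\tau_{n-1})\bigr)$.

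The main subtlety to handle is that the SDE \eqref{eqn:X-additive-process} is written with the compensated integrand $y\,[\mu(dy,ds)-\nu(dy,ds)]$ over all of $\R$, whereas the characteristic exponent in \eqref{eqn:Levy-char} uses the canonical Lévy–Khintchine truncation $y\,\Id_{|y|\le 1}$. I would reconcile the two by re-expressing the jump integral in its standard Lévy–Itô form, splitting the integrand into a compensated small-jump part (over $|y|\le 1$) and an uncompensated large-jump part (over $|y|>1$), absorbing the resulting deterministic drift correction $\int_\R y\,\Id_{|y|>1}\,\nu^{(n)}(dy)$ into the martingale-restoration function $l$ appearing in the definition of $S$ (or equivalently, treating \eqref{eqn:X-additive-process} as shorthand for this canonical decomposition, which is implicit whenever the large-jump first moment may be infinite). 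Once the process is in canonical form, the Lévy–Khintchine formula applies without modification and the computation is routine. The only other check is measurability/integrability of $\Psi^{(n)}$, which follows from the standard bound $|e^{i\omega y}-1-i\omega y\,\Id_{|y|\le 1}| \le C(\omega)\,(|y|^2\wedge 1)$ together with the defining property $\int_\R (|y|^2\wedge 1)\,\nu^{(n)}(dy) <\infty$ of a Lévy measure.
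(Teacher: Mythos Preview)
Your proposal is correct and follows exactly the route the paper takes: its entire proof is the single line ``Apply the L\'evy--Khintchine formula within each period,'' which is precisely your decomposition into independent increments $\Delta X_n$ followed by the period-wise application of L\'evy--Khintchine. Your additional discussion of the truncation mismatch between the fully compensated integrand in \eqref{eqn:X-additive-process} and the canonical $\Id_{|y|\le 1}$ truncation in \eqref{eqn:Levy-char} is a genuine subtlety that the paper does not address, and your resolution (absorbing the large-jump drift into $l$) is the right one.
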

\begin{proof}
    Apply the L\'evy-Khintchine formula \cite[Chap 1.2.4]{applebaum2009levy} within each period.
\end{proof}
The specific form of the L\'evy characteristic function appearing in \eqref{eqn:Levy-char} for the models we employ in the numerical analysis appear in Table \ref{tab: char}. The parameters for the appropriate period should be inserted into these expression when computing the full characteristic function. We also record the characteristic function for the CTMC model in the same table.






\begin{table}
\footnotesize
\renewcommand\arraystretch{1.1}
\centering
\begin{tabular}{rl}
 \toprule
 \toprule
 Model & Characteristic Function\\
 \midrule
 CTMC   & ${\boldsymbol{\pi}}^\intercal e^{\tau_1 \Psi^{(1)}(\omega)} e^{(\tau_{2} - \tau_{1}) \Psi^{(2)}(\omega)} \dots e^{(\tau - \tau_{N-1}) \Psi^{(N)}(\omega)} \boldsymbol{1}$
 \\
 Double Exponential JD& $-\frac{\sigma^2\omega^2}{2}+\lambda\left(p\frac{a_+}{a_+-i\omega} + (1-p)\frac{a_-}{a_-+i\omega}-1\right)$\\
 Gaussian Mixture JD& $-\frac{\sigma^2\omega^2}{2}+\lambda\left(\sum_{i=1}^{K}\tilde{\pi}_i(e^{i\tilde{\mu_i}\omega-\tilde{\sigma_i}^2\omega^2/2})-1\right)$ \\
 CGMY/KoBoL & $C\Gamma(-Y)[(M-i\omega)^Y-M^Y+(G+i\omega)^Y-G^Y]$ \\
 \toprule
 \toprule
\end{tabular}
\vspace*{-1em}
\caption{Summary of characteristic functions within a given maturity period.\label{tab: char}
~\vspace*{-2em}}
\end{table}

\section{Model Parameter Generation}

In the previous section, we described two classes of stochastic models that can be calibrated to option prices. The SDE model parameters $\boldsymbol{\theta}$ are estimated by minimising a weighted average of the mean-squared error in option prices and the Wasserstein distance between the model implied risk-neutral densities and the densities implied by a $B$-spline fit of the IV smiles.  Once the SDE model parameters are estimated on market data, our goal is to generate new synthetic parameters $\boldsymbol{\theta}$ that are consistent with the historical data. This allows us to produce synthetic IV surfaces that are guaranteed to be both arbitrage-free and representative of real surfaces. 

\subsection{Variational Autoencoder (VAE)}
\label{sec:vae}

Variational Autoencoders \cite{kingma2013auto}  are generative models that aim to train a multivariate latent representation, known as an encoding, from a collection of data. \new{A key advantage of VAEs over conventional dimensionality reduction techniques such as vanilla autoencoders (AEs), PCA, or KPCA is their ability to generalize the latent feature space. A well-known issue of AEs is that the latent space may not be continuous and may not exhibit any well defined structure. This makes it difficult to interpolate between training data points and poses difficulties in generating new data, as demonstrated by  \cite{oring2020autoencoder}. In \cite{oring2020autoencoder}, attempts to rectify such issues are made by regularizing the training procedure to ensure the latent manifold is smooth and locally convex. VAEs, however, place a prior on the latent feature and are thus able to easily generate out-of-sample data by sampling from either the prior or the posterior, depending on the specific application. } 

Given a set of samples $\{\mathbf{x_i}\,|\,\mathbf{x_i} \in \R^D \}_{i\geq1}$ from a distribution parameterized by ground truth latent factors $\{\mathbf{z_i}\,|\,\mathbf{z_i} \in \R^{D'} \}_{i\geq1}$ where $D \gg D'$ and a generative model $p_{\psi}(\x)$, we seek to maximise the log-likelihood $\log \, p_{\psi}(\x)$. The log-likelihood is, however, intractable as it involves integration over the posterior $p_{\psi}(\z|\x)$, which, even in setups where $p(\x|\z)$ is specified, is itself intractable. The VAE circumvents this issue by introducing an approximation of the true posterior with a neural network $q_{\phi}(\z|\x)$ parameterized by $\phi$. Indeed, for any distribution $q_\phi(\z|\x)$, we have that the log-likelihood satisfies the inequality
\begin{subequations}
{
\begin{align*}
    \log p_\psi(\x)
    &= \log \int p_\psi(\x|\z)\;p_\psi(\z)\,d\z
    = \log \int \frac{p_\psi(\x|\z)\;p_\psi(\z)}{q_\phi(\z|\x)}\,q_\phi(\z|\x)\,d\z
    \\
    &\ge \int \log\left(\frac{p_\psi(\x|\z)\;p_\psi(\z)}{q_\phi(\z|\x)}\right) \,q_\phi(\z|\x)\,d\z
    = \E_{q_\phi(\z|\x)}\left[ \log p_\psi(\x|\z) \right] - KL\left[q_\phi(\z|\x)\,\|\,p_\psi(\z)\right]\,.
\end{align*}%
}%
\end{subequations}%
The right most expression of this inequality is known as the evidence lower bound (ELBO) of the log-likelihood. It can be shown that the precise gap between the left and right hand sides of this inequality equals the KLD (KL-Divergence) between $q_{\phi}(\z|\x)$ and $p(\z|\x)$, i.e.,
\begin{equation}
    \log \, p_{\psi}(\x) = ELBO + KL[\,q_{\phi}(\z|\x) \lVert p_{\phi}(\z|\x)\,].
\end{equation}

VAEs then view the negative ELBO as a loss and, rather than maximising the intractable log-likelihood, aim to minimize
\begin{equation}
\label{eqn:elbo}
    -ELBO(\phi, \psi) = KL[\,q_{\phi}(\z|\x) \lVert p_{\psi}(\z)\,] - \E_{q_{\phi}(\z|\x)}[\,\log (p_{\psi}(\x|\z))\,],
\end{equation}
where the first term is known as the KLD loss and the second term the reconstruction loss. In principle, the prior,  posterior, and generator can be arbitrary distributions. In practice, however, this typically leads to an intractable ELBO. Thus, we assume they are from the family of Gaussian distributions with diagonal covariance matrices, and that the prior is  the isotropic unit Gaussian. There is no real loss of generality in light of the universal approximation theorem. Specifically, we assume $q_{\phi}(\z|\x) \sim \mathcal{N}(\boldsymbol{\mu_{\phi}(\x)}, \boldsymbol{\sigma_{\phi}}(\x))$, $p_\psi(\x|\z)\sim\mathcal{N}(\boldsymbol{\mu_{\psi}(\x)}, \boldsymbol{\sigma_{\psi}}(\x))$, and $p(\z) \sim \mathcal{N}(\boldsymbol{0}, \boldsymbol{I})$, where $\boldsymbol{\sigma_{\phi}}(\x)$ and $\boldsymbol{\sigma_{\psi}}(\x)$ are diagonal matrices.

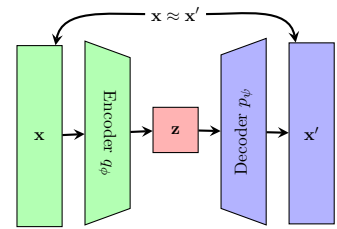
\begin{wrapfigure}[]{r}{0.35\textwidth}
 \vspace*{-1em}
    \centering
    \begin{tikzpicture}[scale=0.6,every node/.style={transform shape},node distance=1cm]
\node (input) [input] {$\mathbf{x}$};
\node (encoder) [encoder, right= of input, xshift=-5em] {Encoder $q_{\phi}$};
\node (latent) [latent, right= of encoder, yshift=5em] {$\mathbf{z}$};
\node (decoder) [decoder, right= of latent, xshift= -5em] {Decoder $p_{\psi}$};
\node (output) [output, right= of decoder, yshift=-5em] {$\mathbf{x'}$};
\node (label) [label, above = of latent, yshift=2em]{$\mathbf{x} \approx \mathbf{x'}$};

\draw [arrow] (input) -- (encoder);
\draw [arrow] (encoder) -- (latent);
\draw [arrow] (latent) -- (decoder);
\draw [arrow] (decoder) -- (output);
\draw [arrow] (label) to [out=0,in=100] (output);
\draw [arrow] (label) to [out=180,in=80] (input);
\end{tikzpicture}
    \caption{VAE architecture.}
    \label{fig:vae}
    \vspace*{-1em}
\end{wrapfigure}
The neural network  that parametrises $q_{\phi}(\z|\x)$ is called the encoder as it ``encodes'' the data into its latent representation, while the neural net that parameterises $p_{\psi}(\x|\z)$ is called the decoder as it ``decodes'' latent representation to recover the original data.  Figure \ref{fig:vae} shows the typical structure of a VAE. We train the encoding $\phi$  and decoding $\psi$ networks simultaneously but only use the decoder at inference time \new{once a sample of latent features have been chosen. As detailed in Section \ref{sec:latent_sampling}, however, the latent sampling procedure makes use of the encoder.}

\subsection{$\beta$-VAE}

The $\beta$-VAE \cite{higgins2016beta} is a modification of the traditional VAE objective that introduces an adjustable hyperparameter $\beta>0$, precisely the ELBO is modified to
\begin{equation}
    \argmax_{\phi, \psi} \: \E_{q_{\phi}(\z|\x)}[\log (p(\x|\z))] - \beta \, KL(q_{\phi}(\z|\x) \lVert p(\z)).
\end{equation}
Larger values of $\beta$ result in more disentangled latent representations $\z$, while smaller values result in more faithful reconstructions. $\beta$ is often chosen to be greater than one to encourage disentanglement; however, this constrains latent information $\z$ and can lead to poorer reconstructions \cite{higgins2016beta}. Instead, we may choose smaller values of $\beta$ to improve reconstructions and rely on \new{directly sampling from the posterior} to accurately sample from the less structured latent space.

\subsection{Latent Sampling}
\label{sec:latent_sampling}

Typically, samples from a VAE are generated by sampling from the latent prior $p(\z)$ and decoding the result. From a Bayesian perspective, however, conditional on a set of observed data $\boldsymbol{X}$, it is more appropriate to sample from the posterior $p_{\psi}(\z|\boldsymbol{X}) \approx q_{\phi}(\z|\boldsymbol{X}) = \int q_{\phi}(\z|\x)p(\x)d\x$. While this integral is typically intractable, it is possible to sample from. \new{This can be done by first uniformly sampling from the data $\x_0 \sim \boldsymbol{X}$, encoding using the approximate posterior $q_{\phi}(\x_0)$ to obtain the  Gaussian parameters $\mu_{\phi}(\x_0)$ and $\sigma_{\phi}(\x_0)$, and finally sampling latent states from said Gaussian $\z_0 \sim \mathcal{N} (\mu_{\phi}(\x_0), \sigma_{\phi}(\x_0))$. This latent sample $\z_0$ can then be decoded to obtain model parameters which may then be used to construct the IV surface.}




\new{
\subsection{Conditional Variational Autoencoder (CVAE)}
\label{sec:CVAE}
A natural extension of the variational modeling framework is the inclusion of observable market features, such as indices or spot rates. We may then generate surfaces conditional on the state of these features through   Conditional Variational Autoencoders (CVAEs) \cite{sohn2015learning}. Such an approach may be used in risk calculations in which scenarios for the conditional features are generated through other means, and our approach used to generate IV surfaces conditioned on those simulated features.
}

\new{In brief, a CVAE is constructed as follows.
Given ground truth latent factors $\z$, observations $\x$, and conditional features $\y$, we define the generative model $p_{\psi}(\x | \y)p(\y)$ where $p(\y)$ is the prior on the conditional features. Following the same argument as in section \ref{sec:vae} we approximate the posterior $p_{\psi}(\z|\x, \y)$ by a neural network $q_{\phi}(\z|\x, \y)$ and minimize the conditional negative ELBO:
\begin{equation}
    -ELBO(\phi, \psi) = KL[\,q_{\phi}(\z|\x,\y) \lVert p_{\psi}(\z|\y)\,] - \E_{q_{\phi}(\z|\x,\y)}[\,\log (p_{\psi}(\x|\z,\y))\,].
\end{equation}
This allows the conditioning feature $\y$ to modulate  how data $\x$ gets encoded and how a latent factor $\z$ gets decoded. In the results section, we include the VIX as a conditioning feature and find that it improves the out of sample performance of the VAE model. In implementation, the CVAE model is identical to the VAE model, except that the conditional features are added into the encoding and decoding networks.
}

\section{Results}
\label{sec:results}

\subsection{Market Data}
\label{market}

We apply our hybrid method to IV data for three currency pairs provided by Exchange Data International: AUD-USD, EUR-USD, and CAD-USD  for the 1,900 days between September 18th, 2012 to December 30, 2019. \new{The data is divided equally into the training set (September 18th, 2012 to May 09, 2016) and the testing set (May 10, 2016 to December 30, 2019.)} The data includes option prices with five strikes at each of eight different maturities (1M, 2M, 3M, 6M, 9M, 1Y, 3Y, 5Y).  Foreign exchange option prices are quoted in terms of deltas rather than strikes and in terms of  at-the-money call, risk reversal, and butterfly spread options, rather than simple calls. We use standard formulas \cite{reiswich2012fx} to convert raw quotes into IVs at deltas of 0.1, 0.25, 0.5, 0.75, and 0.9 for each  maturity.


\subsection{SDE Model Specifics}
\label{sec:SDESpec}

The CTMC model assumes three regimes with a Wasserstein's distance penalty\footnote{The Wasserstein penalty for the various models are chosen from a grid search and balancing goodness of fit to IV smiles with goodness of fit to the implied risk-neutral densities.} of $0.3$. To reduce the number of parameters to fit, initial regime probabilities $\boldsymbol{\pi}$ are set to $\frac13$. To take advantage of the label invariance of the transition matrix, 
the mean $\mu$ of each regime is assumed to be in ascending order, i.e., $\mu_1^n \leq \mu_2^n \leq \mu_3^n$. \new{The structure of the transition matrix is detailed in Appendix \ref{sec:amat}. We fit the CTMC model iteratively by first optimizing for the parameters of the first maturity, then iteratively optimizing the parameters of the $i$'th maturity by holding the parameters of the first $(i -1)$ maturities fixed.}
For the L\'evy additive processes, we fit the parameters subject to  a Wasserstein's distance penalty of $0.1$ for time to maturity (TTM) less than 1 year and $0.3$ otherwise to increase the regularising power at larger TTM.   Moreover, we apply a penalty of $10^{-8}$ on day-to-day parameter percentage changes to stabilise model parameters without sacrificing fit quality. For the Gaussian Mixture JD model,  we assume a mixture of two Gaussians, as adding more factors did not increase the fit quality. The penalty varies with maturity as long maturities tend to have stable pdfs but less stable IVs, while shorter maturities tend to have less stable pdfs. Table \ref{tab:rmse} show the median rmse
across days, where the rmse on a given day is computed across all Delta/maturity pairs, for the collection of models and FX pairs we study.
\begin{table}
\renewcommand{\arraystretch}{1}
\footnotesize
\centering
    \begin{tabular}{rrrr}
    \addlinespace
    \toprule
          & AUD-USD & EUR-USD & CAD-USD \\
    \midrule
    CTMC  &          8.1  &          5.0  &          6.0  \\ 
    DE JD    &        62.0  &        42.9  &        64.2  \\
    GM JD   &        10.2  &        17.0  &        24.3  \\
    CGMY/KoBoL  &      140.2  &      151.3  &      136.2  \\
    \bottomrule
    \end{tabular}
\vspace*{-0.5em}    
\caption{Median rmse ($\times10^{-5}$) for the collection of models and FX pairs.}
\vspace*{-1em}
  \label{tab:rmse}
\end{table}

\subsection{VAE Model Specifics}

The encoder and decoder of the VAE have four fully connected hidden layers, with 64, 128, 256, and 512 nodes each, and a single output layer mapping to the appropriate dimensions. \new{The network structure was selected using a validation set, however, we found that any network exceeding four layers with a minimum of 64 node in each layer is sufficient to produce satisfactory results. We used ADAM with weight decay (AdamW) with a fixed learning rate of 0.001. Appropriate transformations (normalizations, log-transforms) are performed to ensure standardized inputs. Details of the transformations used can be found in Table \ref{tab:transforms} in Appendix \ref{sec:additiona_tables_figures}.
} We perform a grid search over $\beta$ values and number of latent dimensions summarized in Table \ref{tab:AUD_res}. The table reports an evaluation metric described in the next subsection. Training is carried out with batches of $200$ randomly sampled days from the training set. We set a fixed training duration of $2,000$ epochs as we find that is usually sufficient to train the $\beta$-VAE.

\new{
\subsection{Benchmarks}
\label{sec:benchmarks}
We introduce three benchmarks to assess the performance of our approach. These benchmark models all generate distributions of IV surfaces (using only the training data) that we use to assess how close they are to the testing data.  Details on the benchmarks themselves will be given below while the metric we use is described in the next subsection.
}

\new{
The first is a $\beta$-VAE that is fit directly to the set of IVs on the fixed grid of delta and time to maturity (as defined in Section \ref{sec:eval_metric}) without the addition of any arbitrage constraints. This technique is inspired by \cite{bergeron2021autoencoders} (see also \cite{zheng2019gated}), although they favor a flexible point-based method where the inputs to the VAE are arbitrary strikes and times to maturity.  Typically, these point-based approaches are complemented by either penalizing deviations from (static) arbitrage constraints during training or smoothing the resulting surfaces after generation. However, \cite{bergeron2021autoencoders} shows that arbitrage constraints do not enhance the fit and excluding them introduces only negligible amounts of static arbitrage. In our context, the grid-based method is more natural as our data is already structured in this fashion. For comparison purposes, we present the result from a grid of latent dimensions and $\beta$'s consistent with those used for our other approaches. We label this approach as VAE-IV.
}

\new{
As a second benchmark model, we perform a PCA on the trained CTMC model parameters and sample from the dimensionally reduced latent submanifold using a kernel density estimator (KDE). Several choices for the number of latent dimensions are explored in Table \ref{tab:AUD_res}. 
We choose a Gaussian kernel with a bandwidth selected through 20-fold cross-validation. Gaussian kernels are generally quite flexible, but other choices
(such as Epanechnikov, Triweight, and Triangular) are possible. For discussions on kernel and bandwidth selection more generally see, e.g., \cite{gramacki2018nonparametric}. This PCA approach serves as a simplification of our CTMC-VAE model where the VAE sampler is replaced with a dimensionality reduction technique combined with a KDE sampler.
}

\new{
As a final benchmark, we use the empirical distribution of the training data.
}

\subsection{Evaluation Metric}
\label{sec:eval_metric}


Our goal is to generate arbitrage-free IV surfaces that are faithful to the historical dataset. Here, we describe a natural metric that allows us to assess how well we meet this goal. Let $\mathrm{G}$ and $\mathrm{F}$  denote the probability distribution over IV surfaces for the trained model using the our algorithm and the true distribution, respectively. As Wasserstein distances provide a natural metric on the space of probability measures, we use the $1$-Wasserstein distance between the trained and true distribution as our performance metric. While the true distribution $F$ is unknown, the data provides a finite sample from it at a set of discrete 2-dimensional grid points $\Z:=\{z_i\}_{i\in\G}$ (the collection of Delta/TTM pairs which are observed). \new{Specifically, we look at the collection: $\Z = \D \times \M$ where $\D := \{0.1, 0.25, 0.5 0.75, 0.9\}$ and $\M:= \{1M, 2M, 3M, 6M, 9M, 1Y, 3Y, 5Y \}$.} Further, the trained model's distribution may be estimated by sampling from the posterior distribution in latent space (using the method described in Section \ref{sec:latent_sampling}) and decoding to produce SDE model parameters, which can be mapped to a sample of IVs at the set of grid points $\Z$. The $1$-Wasserstein distance between the true and the model's distribution may be estimated by the $1$-Wasserstein distance between the multi-variate distribution of IVs at grid points $\Z$ for the test data and the model generated ones. We refer to this quantity as the Wasserstein metric.

\subsection{Results Summary}


\begin{table}[htbp]
\tiny
\renewcommand{\arraystretch}{0.95}

  \centering
    \begin{tabular}{rcrcrrrrrrrrrrrrrr}
    \addlinespace
    \toprule
          &       &       &       & \multicolumn{ 14}{c}{Latent Dimension}                                                                       \\
          
    \cmidrule(lr){4-18}
    \multicolumn{3}{c}{Model}              &       & \multicolumn{ 4}{c}{AUD-USD}  &       & \multicolumn{ 4}{c}{EUR-USD}  &       & \multicolumn{ 4}{c}{CAD-USD}  
    \\
    \cmidrule(lr){1-3}
    \cmidrule(lr){5-8}
    \cmidrule(lr){10-13}
    \cmidrule(lr){15-18}
    & & $\beta$      &       & 3     & 5     & 10    & 15    &       & 3     & 5     & 10    & 15    &       & 3     & 5     & 10    & 15 
          \\
    \cmidrule(lr){3-0}
    \cmidrule(lr){5-8}
    \cmidrule(lr){10-13}
    \cmidrule(lr){15-18}
    \parbox[t]{2mm}{\multirow{20}{*}{\rotatebox[origin=c]{90}{VAE}}}   & \parbox[t]{2mm}{\multirow{4}{*}{\rotatebox[origin=c]{90}{CTMC}}}  
                  & 0.01  &       &        4.56  &        5.35  &        4.82  &        4.40  &       &        3.88  &        3.63  &        3.61  &        3.97  &       &        1.54  &        2.12  &        1.78  &        1.63  \\
          &       & 0.1   &       &        4.32  &        5.83  &        4.54  &        4.62  &       &        3.18  &        3.29  &        3.62  &        \bf{2.77}  &       &        1.81  &        1.53  &        1.40  &        \bf{1.34}  \\
          &       & 1     &       &        4.67  &        4.43  &        4.13  &        \bf{3.69}  &       &        3.64  &        3.94  &        3.12  &        3.62  &       &        1.72  &        1.55  &        1.48  &        1.70  \\
          &       & 10    &       &        6.10  &        5.92  &        6.40  &        5.63  &       &        4.00  &        3.59  &        3.86  &        3.90  &       &        2.98  &        2.96  &        3.17  &        3.06  \\
\cmidrule(lr){2-18}
          & \parbox[t]{2mm}{\multirow{4}{*}{\rotatebox[origin=c]{90}{DE}}}    
                  & 0.01  &       &        5.13  &        5.37  &        5.30  &        4.66  &       &        3.52  &        3.79  &        3.88  &        3.69  &       &        1.61  &        1.76  &        1.96  &        2.09  \\
          &       & 0.1   &       &        4.61  &        5.29  &        5.04  &        \bf{4.41}  &       &        4.58  &        3.96  &        3.69  &        3.87  &       &        1.51  &        1.95  &        1.50  &        \bf{1.47}  \\
          &       & 1     &       &        5.84  &        4.92  &        4.61  &        4.81  &       &        3.70  &        3.46  &        4.12  &        4.01  &       &        1.65  &        1.76  &        1.94  &        1.51  \\
          &       & 10    &       &        6.01  &        5.25  &        5.40  &        5.08  &       &        \bf{3.38}  &        3.73  &        3.89  &        3.69  &       &        1.84  &        2.28  &        1.82  &        2.20  \\
\cmidrule(lr){2-18}
          & \parbox[t]{2mm}{\multirow{4}{*}{\rotatebox[origin=c]{90}{GM}}}    
                  & 0.01  &       & 5.49  & 5.14  & 5.61  & 5.48  &       & 3.65  & 3.67  & 3.56  & 4.55  &       & 1.86  & 1.73  & 1.88  & 1.63 \\
          &       & 0.1   &       & 5.94  & 5.04  & 5.02  & 5.13  &       & 3.49  & 3.36  & 3.59  & 3.54  &       & \bf{1.54}  & 1.70  & \bf{1.54}  & 1.91 \\
          &       & 1     &       & 5.15  & 5.27  & 5.00  & 5.83  &       & 4.01  & 3.49  & 3.96  & 3.30  &       & 2.11  & 2.28  & 1.93  & 1.62 \\
          &       & 10    &       & 5.72  & 5.56  & \bf{4.83}  & 5.26  &       & 3.22  & \bf{3.15}  & 3.73  & 3.27  &       & 1.81  & 1.70  & 2.03  & 2.11 \\
\cmidrule(lr){2-18}
          & \parbox[t]{2mm}{\multirow{4}{*}{\rotatebox[origin=c]{90}{$\stackrel{\text{CGMY}/}{\text{KoBoL}}$}}} 
                  & 0.01  &       & 5.86  & 6.11  & 5.80   & \bf{5.53}  &       & 3.58  & \bf{3.36}  & 3.94  & 4.02  &       & 2.33  & 2.08  & \bf{1.85}  & 1.94 \\
          &       & 0.1   &       & 6.12  & 6.37  & 5.99  & 5.65  &       & 3.68  & 4.17  & 3.92  & 3.56  &       & 2.19  & \bf{1.85}  & 2.06  & 2.08 \\
          &       & 1     &       & 6.54  & 6.43  & 6.17  & 6.22  &       & 3.58  & 3.79  & 3.94  & 3.88  &       & 2.09  & 2.34  & 1.97  & 2.28 \\
          &       & 10    &       & 5.96  & 5.97  & 5.90   & 6.20   &       & 4.01  & 3.84  & 4.31  & 4.21  &       & 2.16  & 2.24  & 2.21  & 2.61 \\
\cmidrule(lr){2-18}
          & \parbox[t]{2mm}{\multirow{4}{*}{\rotatebox[origin=c]{90}{IV}}}    
                  & 0.01  &       & 6.07  & 6.16  & 6.06  & 6.26  &       & 3.55  & 4.17  & 4.05  & 3.65  &       & 1.65  & 1.61  & 1.62  & 1.64 \\
          &       & 0.1   &       & 6.22  & 6.34  & 5.87  & 5.86  &       & 3.94  & 3.61  & \bf{3.44}  & 4.04  &       & 1.68  & \bf{1.44}  & 1.59  & 1.86 \\
          &       & 1     &       & 6.43  & 6.55  & 6.05  & \bf{5.60}   &       & 3.88  & 4.14  & 4.11  & 4.08  &       & 2.05  & 1.97  & 1.83  & 1.79 \\
          &       & 10    &       & 6.23  & 6.21  & 6.16  & 6.11  &       & 4.37  & 4.14  & 3.88  & 4.28  &       & 1.82  & 1.76  & 2.32  & 1.75 \\
\cmidrule(lr){1-18} 
        \multicolumn{3}{c}{PCA}        &       & \bf{5.25}  & 5.34  & 7.15  & 8.38  &       & \bf{2.51}  & 3.00     & 3.5   & 6.15  &       & 2.22  & \bf{1.73}  & 2.40   & 2.76 \\
\cmidrule(lr){1-18}         
         \multicolumn{3}{c}{Empirical}        &       & 5.82  & 5.82  & 5.82  & 5.82  &       & 3.83  & 3.83  & 3.83  & 3.83  &       & 1.67  & 1.67  & 1.67  & 1.67 \\
    \bottomrule
    \end{tabular}
\caption{Wasserstein metrics ($\times 10^{-2}$)  for varying levels of $\beta$, latent dimensionality, and currency pair. \new{Three benchmarks are shown here for reference: the Wasserstein metric ($\times 10^{-2}$) between the test set and the VAE-IV, PCA, and training empirical models as described in Section \ref{sec:benchmarks}. Bold numbers are the smallest metric within each subgrid.} \label{tab:AUD_res}
}
\end{table}
Table \ref{tab:AUD_res} shows a complete summary of the Wasserstein metric computed for each  currency pair on a range of $\beta$ values and latent dimensions. Increasing the number of  latent dimensions does not generally increase performance. This suggests that for these  currency pairs, most surfaces can be captured with \new{as few as three factors when viewed holistically. However, it does appear that the optimal hyperparameter pair are often at somewhat higher latent dimensions ($\geq$10) which is natural when there are no penalizations on dimensionality.} Interestingly, for both classes of SDE models, decreasing $\beta$ does not lead to significantly worse performance. This is an indication that the posterior sampling from Section \ref{sec:latent_sampling} performed admirably in highly unstructured latent spaces, which is a by-product of low $\beta$ values. For the three L\'evy additive processes explored here, \new{the double exponential model performs the best but it is generally worse performing than the CTMC model. Moreover, the CTMC is able to significantly outperform most of the benchmark methods. }


\new{
Overall, the results show that our generated surfaces are as close to the testing data's distribution as the training set itself! This suggests that to improve our model's performance we need to include additional explanatory features (Section \ref{sec:cvae_result}) or perhaps a temporal structure.}

\begin{figure}
 
\centering
\vspace*{-0.5em}
\begin{subfigure}{0.8\textwidth}
  \centering
  \includegraphics[width=\textwidth]{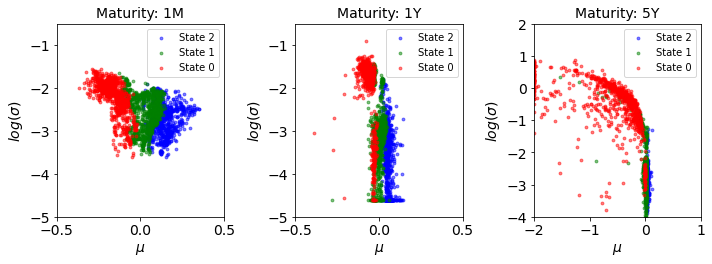}
  \captionof{figure}{ Testing Parameters}
\end{subfigure}%
\vspace*{-0.5em}
\begin{subfigure}{0.8\textwidth}
  \centering
  \includegraphics[width=\textwidth]{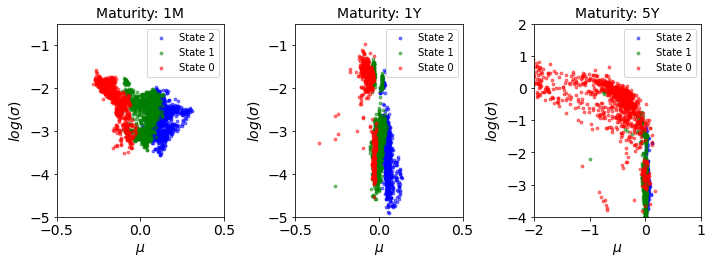}
  \captionof{figure}{ Generated Parameters}
\end{subfigure}
\vspace*{-0.5em}
\vspace*{-0.5em}
\caption{\new{Scatter plots of the three regime specific $\mu$ and $\sigma$ from (a) fitting to test data, and (b) random sample from the generative model, using the CTMC model on AUD-USD data.  Colours represent three different regimes.}}
\label{fig:space_scatters}
\end{figure}
Figure \ref{fig:space_scatters} shows a comparison between (a) the CTMC parameters obtained by fitting to test data, and (b) random samples from the corresponding VAE model. \new{We focus on the two most important parameters \textbf{$\mu$} and \textbf{$\sigma$} (which are state and maturity specific) of the CTMC model. We select three maturities 1M, 1Y, and 5Y to succinctly illustrate the results. Ideally, this comparison would be made using the IV surfaces directly implied by the the parameters; however, no good visualization is available for such comparisons. Instead, we illustrate the similarity of the generated and test data distributions using the model parameters. The figure showcases the VAE's ability to capture the complex structures of the CTMC model parameters which in turn is used to generate the final implied volatility surfaces.} As the figure shows, the VAE successfully captures the various complex structures that are inherent in the test data across all maturities and states. 

\begin{figure}
\centering
\includegraphics[width=0.6\textwidth]{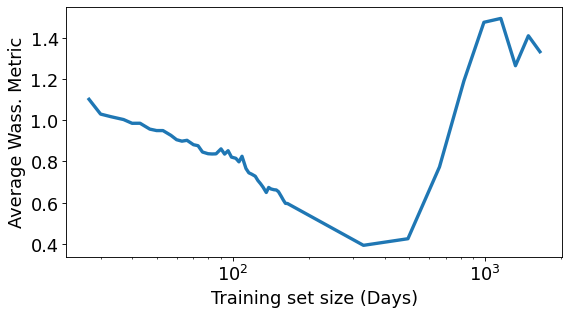}
\vspace*{-1em}
\caption{\new{Average scores of the CTMC model when trained on a range of different sized training sets and tested on the period from January 15th, 2019 to December 30, 2019.}}
\label{fig:training}
\vspace*{-1em}
\end{figure}
\new{Next, we investigate the impact  the training window has on our results. To this end, Figure \ref{fig:training} summarizes how the length of the training set affects the Wasserstein metric between the testing data and the CTMC-VAE model. The score is computed as the average Wasserstein metric, using a randomly generated sample of 500 surfaces and a predefined testing set, across sixteen different sets of hyperparameters as described in Table \ref{tab:AUD_res}. The testing data is fixed to be the period from January 15th, 2019 to December 30th, 2019. 
Here, we reduced the size of the test set in order to illustrate how the size of the training set may both improve and worsen the model's performance.
The ending date of the training data is fixed to be January 14th, 2019, while the starting date ranges from September 18th, 2012 to December 5th, 2018. The figure suggests that approximately 350 days of data is sufficient to fully train the network. The average score increases as the training set size is reduced beyond this point. In contrast, larger time horizon training sets typically produce worse scores as the training set becomes less representative of the current state of the market. This leads to a generative model that may be historically accurate but does not reflect the data in the near future.} 

\begin{figure}
\centering
\includegraphics[width=\textwidth]{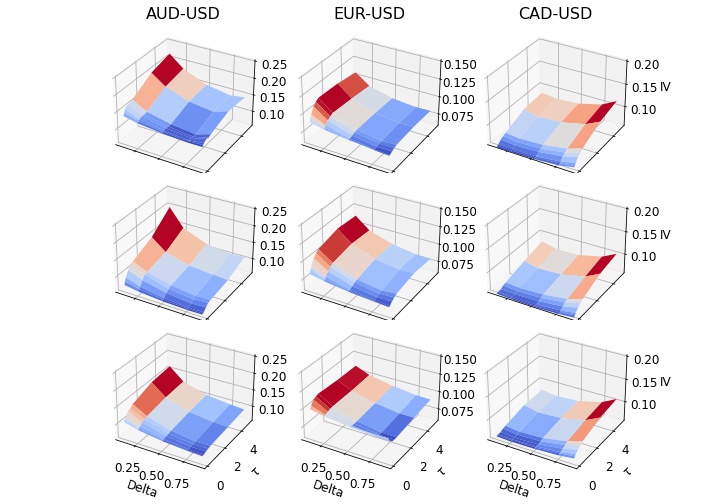}
\vspace*{-0.5em}
\caption{\new{Sample of three randomly generated surfaces using the CTMC-VAE for each of the three currency pairs.}}
\label{fig:samp_surf}
\vspace*{-1em}
\end{figure}
\new{While we show the sampling of model parameters in Figure \ref{fig:space_scatters}, it is informative to  generate surfaces themselves. For this purpose, we show three randomly generated surfaces from each of the three currency pairs in Figure \ref{fig:samp_surf}. There are clear differences between the surfaces for a specific currency pair, however, the general characteristics (skew, level and smile) are similar for a fixed pair. This demonstrates the model's ability to capture the innate characteristics of each currency pair but still faithfully respect the observed variation between random samples.}

\subsection{CVAE Results}
\label{sec:cvae_result}

\new{To test the efficacy of the CVAE approach, we focus on using the daily closing CBOE Volatility Index (VIX) as a predictor for generating IV surfaces in the testing set, courtesy of Wharton Data Services \cite{vixdata}. Specifically, we train the CVAE on several currency pairs conditional on the end-of-day VIX index value. We then condition on the value of the VIX index for each day in the testing set to generate surfaces by randomly sampling from the latent space as specified in section \ref{sec:latent_sampling}. Table \ref{tab:cond_res} in the Appendix details the results while Table \ref{tab:cond_sum} summarizes the comparison between the CVAE, CTMC-VAE, and the benchmarks described in Section \ref{sec:benchmarks}. We train our model using the same training set as  in Section \ref{sec:results}. The results in the table show that the VIX index has significant predictive power on IV surfaces, as the generated surfaces conditional on the VIX index produces significantly smaller metric compared to both the unconditional CTMC-VAE and all benchmark approaches. }
\begin{table}[h]
\renewcommand{\arraystretch}{0.95}

\centering
\footnotesize
\begin{tabular}{rrrrrrrrrrrrr}
 \toprule
 & \multicolumn{12}{c}{Latent Dimension}\\
\cmidrule(lr){2-13}
 & \multicolumn{4}{c}{AUD-USD} & \multicolumn{4}{c}{EUR-USD} & \multicolumn{4}{c}{CAD-USD} \\
 \cmidrule(lr){2-5}\cmidrule(lr){6-9}\cmidrule(lr){10-13}
 Model  & 3 & 5 & 10 & 15 & 3 & 5 & 10 & 15 & 3 & 5 & 10 & 15\\
 \midrule
 CTMC-CVAE &  3.24 & 3.65 & 3.35 & 3.70 & 2.64 & 2.72 & 2.75 & 2.76 & 1.47 & 1.38 & 1.26 & 1.37\\
 CTMC-VAE  &  4.91 & 5.38 & 4.97 & 4.59 & 3.68 & 3.61 & 3.55 & 3.57 & 2.01 & 2.04 & 1.96 & 1.93\\
 IV-VAE    &  6.24 & 6.32 & 6.04 & 5.96 & 3.94 & 4.01 & 3.87 & 4.02 & 1.80 & 1.69 & 1.84 & 1.76\\
 PCA  &  5.25 & 5.34 & 7.15 & 8.38 & 2.51 & 3.00 & 3.50 & 6.15 & 2.22 & 1.73 & 2.40 & 2.76\\
 Empirical   &  5.82 & 5.82 & 5.82 & 5.82 & 3.83 & 3.83 & 3.83 & 3.83 & 1.67 & 1.67 & 1.67 & 1.67\\
 \toprule
\end{tabular}
\vspace*{-0.5em}    
\caption{\new{Average Wasserstein's metric ($\times 10^{-2}$) across all $\beta$ values for different number of latent dimensions used for the CTMC-CVAE, CTMC-VAE, and benchmarks models IV-VAE, PCA, and Empirical for three currency pairs.}
\label{tab:cond_sum}}
\vspace*{-2em}
\end{table}

\section{Conclusions}


\new{Overall, the results show that our generated surfaces are as close to the testing data's distribution as the training set itself! This suggests that to improve our model's performance we need to include additional explanatory features (Section \ref{sec:cvae_result}) or perhaps a temporal structure.}

To summarise, we propose a hybrid approach for generating synthetic IV surfaces by first calibrating SDE model parameters to historical data -- using a Wasserstein penalty between the implied model risk-neutral distribution and that induced by option data as a regularisation term -- and then training a rich VAE model to learn the distribution on the space of SDE model parameters. We show that the distribution of IV surfaces from the VAE model is \new{capable of generating surfaces as close to the testing data's distribution as the training set itself}, and performs well in comparison with \new{several benchmarks}, while ensuring the generated surfaces are arbitrage-free.

\new{A short demo of the CTMC model and VAE model fitting procedure are available\footnote{Please note that some notebooks require significant run time due to the adaption of C++ to Python.} at: \url{https://github.com/BrianNingUT/ArbFreeIV-VAE}. }

\bibliographystyle{siamplain}
\bibliography{references}


\appendix

\section{Candidate Risk Neutral Density}
\label{sec:ERDens}
{

Foreign exchange data is often quoted only at specific strikes/deltas. In order to reduce the possibility of overfitting, we introduce a candidate risk neutral density which we attempt to minimize the 1-Wasserstein distance to. We can approximate this candidate density by interpolating implied volatilities at each fixed maturity using B-splines. It is important to note that as this is simply a candidate density derived from a spline interpolation of the IV surface, it provides no guarantee on risk-neutrality. 

Let us define such an interpolated surface by $\sigma(K, \tau)$. The price using this implied volatility may be written in terms of the Black-Scholes price of a call option with spot price $S_0$, strike $K$, maturity $\tau$, and risk-neutral interest rate $r$ as
\begin{align}
\label{eqn:BS_mod}
    C(S_0, K, \tau, r) &= S_0\, \Phi(d^+(K,\tau)) - e^{-{r \tau}}\, K\, \Phi(d^-(K,\tau)),
    \\
    d^\pm(K,\tau) &= \frac{1}{ \sigma(K,\tau) \sqrt{\tau}}\left(\log\left(\tfrac{S_0}{K}\right) + ({r} \pm \tfrac{1}{2} \sigma^2(K,\tau)) \tau\right).
\end{align}

It is well known \cite{breeden1978prices} that, in an arbitrage-free model, $\partial_{KK} C(S_0, K, \tau)$ corresponds to the risk-neutral density of the underlying asset price evaluated at $K$. Thus, we can simply take the second derivative of \ref{eqn:BS_mod} to find the implied density. 

To compute this density, for a fixed $\tau$, we may consider all but $K$ fixed and constant.  As we use a spline representation of the implied volatility we may evaluate the candidate risk-neutral density at any point within the range of available values of $K$. For simplicity, we set $r=0$, and drop the dependence of $d^\pm(K,\tau)$ on $\tau$. Thus,
\begin{align}
\label{eqn:spline_candidate_RN_density}
    \frac{\partial}{\partial ^2 K} C(S_0, K, \tau) 
    &= S_0\frac{\partial}{\partial K}\left\{ \underbrace{\phi\left(\bar{d}^+(K)\right) \bar{d}^{+\prime}(K)}_{:=t_1(K)} - \underbrace{\left(\Phi\left(\bar{d}^-(K)\right) + K \; \phi\left(\bar{d}^-(K)\right) \bar{d}^{-\prime}(K)\right)}_{:=t_2(K)}
    \right\}
\end{align}
Focusing on the remaining derivatives, we have
\begin{subequations}
\begin{align}
    \frac{\partial}{\partial K} t_1(K) &= -\bar{d}^+(K) \phi\left(\bar{d}^+(K)\right)\bar{d}^{+\prime}(K)^2 + \phi\left(\bar{d}^+(K)\right) \bar{d}^{+\prime\prime}(K)
    \\
    \frac{\partial}{\partial K} t_2(K)
    &= 2 \, \phi\left(\bar{d}^-(K)\right)\bar{d}^{-\prime}(K)
    - K \phi\left(\bar{d}^-(K)\right)\left[ \bar{d}^-(K) \, \bar{d}^{-\prime}(K)^2 - \bar{d}^{-\prime\prime}(K)\right]
\end{align}
and
\begin{align}
    \bar{d}^{+\prime}(K)
    &= \frac{-1}{K\sigma(K)\sqrt{\tau}} + \frac{\log(K) \, \sigma'(K)}{\sigma^2(K)\sqrt{\tau}} + \frac{1}{2}\sigma'(K)\sqrt{\tau}
    \\
    \bar{d}^{-\prime}(K)
    &= \bar{d}^{+\prime}(K) - \sigma'(K)\sqrt{\tau}
    \\
    \bar{d}^{+\prime\prime}(K)
    &= \frac{\sigma(K) + 2K\sigma'(K)}{(K)^2 \, \sigma^2(K)\sqrt{\tau}}
    + \frac {\log(K) \, \sigma(K) \, \sigma''(K) + 2 \, \log(K) \, \sigma'(K)}{\sigma^3(K)\sqrt{\tau}}
    + \frac{1}{2}\sigma''(K)\sqrt{\tau}
    \\
    \bar{d}^{-\prime\prime}(K)
    &= \bar{d}^{+\prime\prime}(K) - \sigma''(K)\sqrt{\tau}\,.
\end{align}
As we use splines for $\sigma(K)$, putting these computations together with \eqref{eqn:spline_candidate_RN_density} provides us with the candidate risk-neutral density which we use to regularise the implied volatility fits to.
\end{subequations}
}

\new{
\section{Effect of $\alpha$}
\label{sec:alpha_eff}
The parameter $\alpha$ serves as a regularising term to prevent overfitting when the data points in the delta axis are sparse. Figure \ref{fig:alpha_plots} shows its effects. We have chosen a day that is particularly difficult to fit to exhibit the effects of $\alpha$. Large values of $\alpha$ (green) correspond to smoother risk-neutral density curves that only deviate slightly from the non-risk-neutral density implied by the spline interpolation at the cost of significantly poor fits to the IV surfaces. In contrast, smaller values of $\alpha$ (blue) typically correspond to rougher risk-neutral densities that often lead to rougher IV surfaces that over-fit to the small number of IV points available. Often, the middle ground (orange), which correctly balances both accuracy and smoothness, is required. In practice, candidate back-testing can determine the optimal choice. We employ a grid search and balance goodness of fit to IV smiles with goodness of fit to the implied risk-neutral densities. For different models, different optimal $\alpha$ are obtained, details are found in Section \ref{sec:SDESpec}. 
It is important to note that the spline implied density (grey) is not guaranteed to be risk-neutral and, thus, a perfect fit to such densities is sometimes impossible despite choosing a large value of $\alpha.$
}

\begin{figure}

\centering
\includegraphics[width=\textwidth]{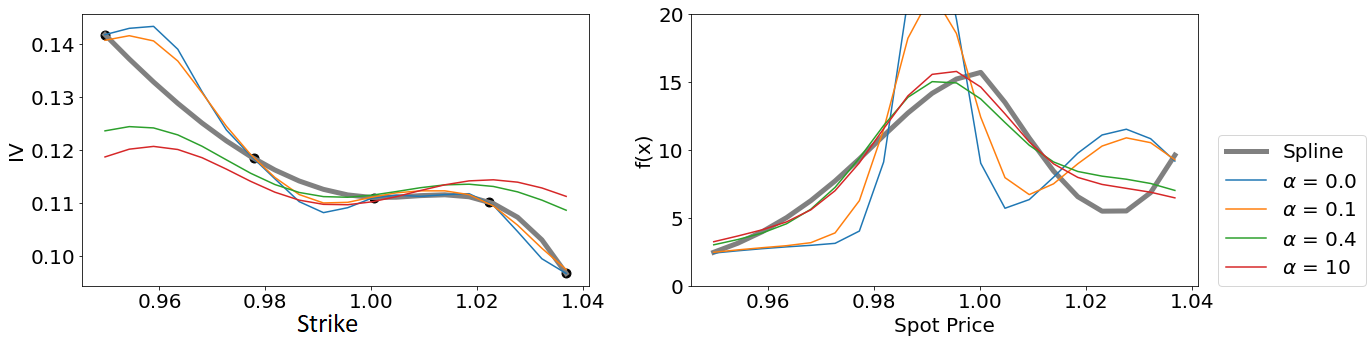}
\vspace*{-1em}
\caption{\new{Fits of the CTMC model to the first maturity (1M) of a particular day (2012-05-16) at varying level of the regularizer $\alpha$. The left panel showcases the fits to the IV surface and the right panel the corresponding risk-neutral density.}
}
\label{fig:alpha_plots}
\vspace*{-1em}
\end{figure}

\section{Proofs}
\begin{proof}[{Proof of Proposition \ref{prop:ctmc}}]

Denote $v_t:=\EQ[e^{iz X_{\tau}}|\mathcal{F}_t]$. As $(X,Z)$ is Markov, there exists a function $v:\K\times\mathds{R}_+\times\mathds{R}\to\mathds{R}$, such that $v_t=v^{Z_t}(t,X_t)$.  Moreover, applying the Feynman-Kac theorem, we have that the function $v^k(t,x)$ satisfies the coupled system of PDEs
\begin{equation}
    \left(\partial_t + (A_{kk}^{(n)} + \mathcal{L}^{k(n)})\right)v^k(x, t) + \sum_{j \neq k} A_{kj}^{(n)} v^j(x,t) = 0,\quad \forall\;t\in[\tau_{n-1},\tau),\; n\in\mathfrak{N},\; k\in\K,
    \label{eqn:PDE-for-v}%
\end{equation}%
subject to the terminal condition (t.c.) $v^k(\tau,x) = e^{iz\,x}$, and $\mathcal{L}^k$ denotes the infinitesimal generator, given $Z_t=k$, which acts upon twice differentiable functions as follows
\begin{equation}
\mathcal{L}^{k(n)} f(x) = \left(\mu_k^n - \tfrac{1}{2} (\sigma_k^n)^2\right) \partial_x f(x)    + \tfrac{1}{2} (\sigma_k^n)^2 \partial_{xx} f(x)    \,.
\end{equation}

To solve the coupled system of PDEs \eqref{eqn:PDE-for-v}, we apply the Fourier transform defined as $\hat{f}(\omega) = \mathcal{F}[f](\omega):= \int_{-\infty}^{\infty} e^{i \omega x}f(x)\,dx$ to both sides of \eqref{eqn:PDE-for-v}. Recall that $
    \mathcal{F}[\partial_x^n f](\omega)
    = i\, \omega \, \mathcal{F}[\partial_x^{n-1}](\omega)
    = \dots
    = (i \omega)^n\, \mathcal{F}[f](\omega)$. Thus,
\begin{equation}
    \mathcal{F}[\mathcal{L}^{k(n)} v^k](\omega, t) = \left(i \left(\mu_k^n - \tfrac{1}{2} (\sigma_k^n)^2\right) \omega - \tfrac{1}{2} (\sigma_k^n)^2 \omega ^2 \right) \mathcal{F}[v^k](\omega, t).
\end{equation}

Using the above, and denoting $\hat{v}^k = \mathcal{F}[v^k]$, \eqref{eqn:PDE-for-v} may be written in Fourier space as
\begin{equation}
    \left[ \partial_t + A^{(n)}_{kk}
    + \gamma^n_k(\omega) \right] \hat{v}^k(t, \omega) + \sum_j A^{(n)}_{kj} \hat{v}^j(t, \omega) = 0\,,
    \quad \forall\;t\in[\tau_{n-1},\tau),\; n\in\mathfrak{N},\; k\in\K,
    \label{eqn:PDE-transformed}
\end{equation}%
s.t. the t.c. $v^k(\tau,x)=\mathcal{D}(z-\omega)$, where $\gamma^n_k(\omega):=i \left(\mu_k^n - \tfrac{1}{2} (\sigma_k^n)^2\right) \omega
    - \tfrac{1}{2} (\sigma_k^n)^2 \omega^2$ and $\mathcal{D}$ is the Dirac delta function.
We may further rewrite this system of equations in matrix notation by (i) defining the matrix $\Psi^n(\omega)$ whose entries are $
    [\Psi(\omega)]_{jk} = \left( A^{(n)}_{kk} +\gamma^n_k(\omega) \right)\delta_{jk} + A^{(n)}_{jk} \;(1 - \delta_{jk})$
where $\delta_{jk}$ is the Kroencker delta, and (ii) defining the vector of transformed prices $\boldsymbol{\hat{v}}(t, \omega)=({\hat{v}}^1(t, \omega),\dots,{\hat{v}}^K(t, \omega))^\intercal$.  Thus, \eqref{eqn:PDE-transformed} may be written as a vector-valued ODE
\begin{align}
    (\partial_t + \Psi(\omega)) \boldsymbol{\hat{v}}(t, \omega) = \boldsymbol{0}\;, \qquad\qquad \quad \forall\;t\in[\tau_{n-1},\tau),\; n\in\mathfrak{N},
\end{align}
s.t. the t.c. $\boldsymbol{\hat{v}}(\tau, \omega) = \mathcal{D}(\omega-z)\,\boldsymbol{1}$.
This system may be solved explicitly by backward induction. For $t \in [\tau_{N-1}, \tau)$ (the last period), the matrix ODE admits the solution
$\boldsymbol{\hat{v}}(t, \omega) = e^{(\tau - t) \Psi(\omega)} \boldsymbol{1}  \;\hat{\varphi}(\omega)\;.
$
Next, due to continuity, we have that
$
     \boldsymbol{\hat{v}}(\tau_{N-1}, \omega) = \lim_{t \downarrow \tau_{N-1}}  \boldsymbol{\hat{v}}(t, \omega) = e^{(\tau - \tau_{N-1}) \Psi(\omega)}\boldsymbol{1}\;\hat{\varphi}(\omega)$.
Using this limit as the t.c. at $t=\tau_{N-1}$, for $t\in(\tau_{N-2},\tau_{N-1}]$, we solve
    $(\partial_t + \Psi_{N-1}(\omega)) \boldsymbol{\hat{v}}(t, \omega) = \boldsymbol{0}$, which admits the solution
\[
\boldsymbol{\hat{v}}(t, \omega) = e^{(\tau_{N-1} - t) \Psi_{N-1}(\omega)} e^{(\tau - \tau_{N-1}) \Psi(\omega)}\boldsymbol{1}  \;  \mathcal{D}(\omega-z).
\]
Continuing iteratively, we obtain
Continuing iteratively we arrive at
\begin{equation}
    \boldsymbol{\hat{v}}(0, \omega) =  e^{\tau_1 \Psi_{1}(\omega)} e^{(\tau_{2} - \tau_{1}) \Psi_{2}(\omega)} \dots e^{(\tau - \tau_{N-1}) \Psi(\omega)} \boldsymbol{1}\;\mathcal{D}(\omega-z)\;.
\end{equation}
Averaging over the prior $\pi$ on $Z_0$, and taking the Fourier inverse (which is trivial due to the Dirac delta function), we obtain the stated result.
\end{proof}

\section{Structure of the A matrix}
\label{sec:amat}
\new{
We assume each state $k$ has its corresponding rate parameter $\lambda_k$ which determines the rate at which the chain will move out of the state. For any state $1 \leq k \leq K $, the chain has an equal chance of moving into the state below ($k-1$) or above ($k + 1$). We further assume that states form a cyclical graph so that  state $1$ may transition to state $K$, and vice versa. Any other transitions will have probability $0$. This restricts the process to only being able to move through one state at a time without jumping.}

\new{All together, this creates a generator matrix of the form:}
\begin{equation}
    A^n =
    \begin{bmatrix}
    -\lambda_1^n & \frac{\lambda_1^n}{2} & 0 & \dots &\frac{\lambda_1^n}{2}\\
    \frac{\lambda_2^n}{2} & -\lambda_2^n & \frac{\lambda_2^n}{2} & 0 & \dots \\
    \vdots & \ddots\\
    \dots & 0 & \frac{\lambda_{K-1}^n}{2} & -\lambda_{K-1}^n & \frac{\lambda_{K-1}^n}{2} \\
    \frac{\lambda_K^n}{2} & \dots & 0 & \frac{\lambda_K^n}{2} & -\lambda_K^n
    \end{bmatrix}
\end{equation}

\new{which reduces the number of parameters needed to be estimated at each given maturity to $3K$ (K from $\sigma$, K from $\mu$, and K from $\lambda$) excluding the vector of initial probabilities, where $K$ is the number of possible states of the system.}

\section{Additional Tables and Figures}
\label{sec:additiona_tables_figures}

\begin{table}[t!]
\renewcommand{\arraystretch}{1}
\centering

\scriptsize
\begin{tabular}{rrrrrrrrrrrrrr}
 \toprule
 &  & \multicolumn{12}{c}{Latent Dimension}\\
\cmidrule(lr){3-14}
 &  & \multicolumn{4}{c}{AUD-USD} & \multicolumn{4}{c}{EUR-USD} & \multicolumn{4}{c}{CAD-USD} \\
 \cmidrule(lr){3-6}\cmidrule(lr){7-10}\cmidrule(lr){11-14}
 Model & $\beta$ & 3 & 5 & 10 & 15 & 3 & 5 & 10 & 15 & 3 & 5 & 10 & 15\\
 \midrule
 \multirow{4}{*}{CTMC-CVAE}
 &  0.01 & 3.47 & 4.11 & 3.41 & 3.94 & 3.19 & 3.17 & 3.34 & 3.16 & 1.63 & 1.34 & 1.12 & 1.38\\
 &  0.1  & 3.54 & 3.47 & 3.46 & 4.18 & 2.90 & 3.58 & 3.11 & 3.12 & 1.08 & 1.13 & 1.08 & 1.22\\
 &  1    & 3.99 & 3.90 & 3.27 & 3.49 & 2.48 & 2.23 & 2.55 & 2.81 & 1.18 & 1.06 & 1.13 & 0.91\\
 &  10   & 2.97 & 3.14 & 3.26 & 3.18 & 1.96 & 1.89 & 2.00 & 1.96 & 2.00 & 2.00 & 1.72 & 1.96\\
 
  \midrule
 \multirow{4}{*}{CTMC-VAE}
 &  0.01 & 4.56 & 5.35 & 4.82 & 4.40 & 3.88 & 3.63 & 3.61 & 3.97 & 1.54 & 2.12 & 1.78 & 1.63\\
 &  0.1  & 4.32 & 5.83 & 4.54 & 4.62 & 3.18 & 3.29 & 3.62 & 2.77 & 1.81 & 1.53 & 1.40 & 1.34\\
 &  1    & 4.67 & 4.43 & 4.13 & 3.69 & 3.64 & 3.94 & 3.12 & 3.62 & 1.72 & 1.55 & 1.48 & 1.70\\
 &  10   & 6.10 & 5.92 & 6.40 & 5.63 & 4.00 & 3.59 & 3.86 & 3.90 & 2.98 & 2.96 & 3.17 & 3.06\\
 
  \midrule
 \multirow{4}{*}{IV-VAE(B)}
 &  0.01 & 6.07 & 6.16 & 6.06 & 6.26 & 3.55 & 4.17 & 4.05 & 3.65 & 1.65 & 1.61 & 1.62 & 1.64\\
 &  0.1  & 6.22 & 6.34 & 5.87 & 5.86 & 3.94 & 3.61 & 3.44 & 4.04 & 1.68 & 1.44 & 1.59 & 1.86\\
 &  1    & 6.43 & 6.55 & 6.05 & 5.60 & 3.88 & 4.14 & 4.11 & 4.08 & 2.05 & 1.97 & 1.83 & 1.79\\
 &  10   & 6.23 & 6.21 & 6.16 & 6.11 & 4.37 & 4.14 & 3.88 & 4.28 & 1.82 & 1.76 & 2.32 & 1.75\\
 
   \midrule
CTMC-PCA(B)  & &  5.25 & 5.34 & 7.15 & 8.38 & 2.51 & 3.00 & 3.50 & 6.15 & 2.22 & 1.73 & 2.40 & 2.76\\
 
    \midrule
 Empirical (B)&   &  5.82 & 5.82 & 5.82 & 5.82 & 3.83 & 3.83 & 3.83 & 3.83 & 1.67 & 1.67 & 1.67 & 1.67\\
 
 \toprule
\end{tabular}
\vspace*{-0.5em}
\caption{\new{Wasserstein metrics ($\times 10^{-2}$)  for varying levels of $\beta$, latent dimensionality, and currency pair using the CVAE with the CTMC model with VIX as predictor compared with the vanilla CTMC-VAE and benchmark approaches (B) IV-VAE and CTMC-PCA trained on the period from September 18th, 2012 to May 09, 2016. The Wasserstein metric ($\times 10^{-2}$) between the training and tests sets (Empirical) are presented here for reference.}\label{tab:cond_res}
~\vspace*{-2em}
}
\end{table}

\begin{table}[h!]
\centering

\small
\begin{threeparttable}
\begin{tabular}{lll}
 \toprule
 Model & Parameter & Transforms\\
 \midrule
 \multirow{4}{*}{CTMC}
 & $\pi_1 \dots \pi_3$ & None \tnote{1} \\
 & $\mu_1 \dots \mu_3$ & $\frac{\mu_i - \mubar(\mu_i)}{s(\mu_i)}$ \\
 &$\sigma_1 \dots \sigma_3$ & $\frac{log(\sigma_i) - \mubar(log(\sigma_i))}{s(log(\sigma_i))}$ \\
 & $\lambda_1, \dots \lambda_3$ & $\frac{log(\lambda_i)- \mubar(log(\lambda_i))}{s(log(\lambda_i))}$ \\
  \midrule
  
  \multirow{4}{*}{DE-JD}
  & $\sigma$ & $\frac{log(\sigma) - \mubar(log(\sigma))} {s(log(\sigma)}$ \\
 & $\lambda$ & $\frac{\lambda_i - \mubar(\lambda_i)}{s(\lambda_i)}$ \\
 &$p$ & $\frac{\tp - \mubar(\tp)}{s(\tp)}, \quad \tp = log \left( \frac{p}{1-p}\right)$ \\
 & $a_1$ & $\frac{a_1 - \mubar(a_1)}{s(a_1)}$ \\
 & $a_2$ & $\frac{a_2 - \mubar(a_2)}{s(a_2)}$ \\
   \midrule
  
  \multirow{4}{*}{GM-JD}
  & $\sigma$ & $\frac{log(\sigma) - \mubar(log(\sigma))} {s(log(\sigma)}$ \\
 & $\lambda$ & $\frac{\lambda_i - \mubar(\lambda_i)}{s(\lambda_i)}$ \\
 &$\tpi_1, \tpi_2$ & $\frac{log(\eta_i) - \mubar(log(\eta_i))}{s(log(\eta_i)}, \quad \eta_1 = 1, \eta_2 = \frac{\tpi_2}{\tpi_1} \dots$ \\
 & $\tmu_1, \tmu_2$ & $\frac{\tmu_i - \mubar(\tmu_i)}{s(\tmu_i)}$ \\
 & $\tsig_1, \tsig_2$ & $\frac{log(\tsig_i) - \mubar(log(\tsig_i))}{s(log(\tsig_i))}$ \\
   \midrule
  
  \multirow{4}{*}{CGMY/KoBoL}
 & $C$ & $\frac{log(C) - \mubar(log(C)))}{s(log(C))}$ \\
 & $G$ & $\frac{log(G) - \mubar(log(G)))}{s(log(G))}$ \\
 & $M$ & $\frac{log(M) - \mubar(log(M)))}{s(log(M))}$ \\
 & $Y$ & $\frac{log(Y) - \mubar(log(Y)))}{s(log(Y))}$ \\

 \toprule
\end{tabular}
\vspace*{-0.5em}
\caption{\new{Normalizations performed on model parameters before input into VAE. $\mubar(\cdot)$ and $s(\cdot)$ are the empirical mean and standard deviation respectively.} \label{tab:transforms}
~\vspace*{-1em}
}
\begin{tablenotes}
\item[1] $\pi_1=\pi_2=\pi_3=\frac13$
\end{tablenotes}
\end{threeparttable}

\end{table}

\end{document}